\newtheorem{theorem}{Theorem}
\newtheorem{definition}{Definition}
\newcommand{\system}{HERMES\@}
\newcommand{\exprotocol}{SePCAR\@}
\newcommand\y{71}
\newcommand\sy{64}
\newcommand{\CC}{C\nolinebreak\hspace{-.05em}\raisebox{.4ex}{\tiny\bf +}\nolinebreak\hspace{-.12em}\raisebox{.4ex}{\tiny\bf +}}
\def\CC{{C\nolinebreak[4]\hspace{-.05em}\raisebox{.4ex}{\tiny\bf ++}}}
\newcounter{pr}
\newcommand{\pr}{\addtocounter{pr}{1}\arabic{pr} }
\newcounter{sr}
\newcommand{\sr}{\addtocounter{sr}{1}\arabic{sr} }
\newcounter{fr}
\newcommand{\fr}{\addtocounter{fr}{1}\arabic{fr} }
\newcounter{psr}
\newcommand{\psr}{\addtocounter{psr}{1}\arabic{psr} }
\def \totalMiMCsingleVehTime {127,37}
\def \totalMiMCmultiVehTime {137,44}
\def \stepAtime {52,7}
\def \stepBMiMCsingleVehTime {1,83}
\def \stepBMiMCmultiVehTime {11,9}
\def \stepBMiMCmultiVehATs {84}
\def \stepBMiMCmoreAT {696}
\def \stepBmacCBCtime {30,3}
\def \stepBCBCmoreAT {42}
\def \stepCtime {6,65}
\def \stepDtime {62,087}
\def \kdftime {2,87}
\def \shareshamirtime {10,78}
\def \Pubdectime {9,53}
\def \SymdecNexcomtime {3,15}
\def \signtime {4,25}
\def \signNexcomtime {32,43}
\def \signvrfNexcomtime {15,16}
\def \MACvrftime {3,49}
\newacronym{VSS}{VSS}{Vehicle Sharing-access System}
\newacronym{VSSP}{VSSP}{Vehicle Sharing-access Service Provider}
\newacronym{GDPR}{GDPR}{General Data Protection Regulation}
\newacronym{VM}{VM}{Vehicle Manufacturer}
\newacronym{OBU}{OBU}{On-Board Unit}
\newacronym{PL}{PL}{Public Ledger}
\newacronym{PD}{PD}{Portable Device}
\newacronym{MPC}{MPC}{Multiparty Computation}
\newacronym{PRF}{PRF}{Pseudo-Random Function}
\newacronym{PKI}{PKI}{Public Key Infrastructure}
\newacronym{EtA}{EtA}{Encrypt-then-Authenticate}
\newacronym{MAC}{MAC}{Message Authentication Code}
\newacronym{HSM}{HSM}{Hardware Security Module}
\newacronym{BD}{BD}{Booking Details}
\newacronym{AT}{AT}{Access Token}
\newacronym{RTT}{RTT}{Round Trip Time}
\newacronym{TP}{TP}{Trusted Provider}
\newacronym{SP}{SP}{Service Provider}
\newacronym{DB}{DB}{Database}
\newcommand{\length}[1]{||{#1}||}
\begin{document}
\title{\system: Scalable, Secure, and Privacy-Enhancing Vehicular Sharing-Access System} 

\author{
    \IEEEauthorblockN{
        Iraklis Symeonidis\IEEEauthorrefmark{1}, 
        Dragos Rotaru\IEEEauthorrefmark{2}\IEEEauthorrefmark{3},
        Mustafa A. Mustafa\IEEEauthorrefmark{4}\IEEEauthorrefmark{3}, 
        Bart Mennink\IEEEauthorrefmark{5}, 
        Bart Preneel\IEEEauthorrefmark{3},
        Panos Papadimitratos\IEEEauthorrefmark{1}}\\
\IEEEauthorblockA{
    \IEEEauthorrefmark{1}\textit{NSS}, \textit{KTH}, Sweden,
    \IEEEauthorrefmark{2}\textit{Cape Privacy},
    \IEEEauthorrefmark{3}\textit{imec-COSIC}, \textit{KU Leuven}, Belgium,
    \IEEEauthorrefmark{4}\textit{Department of Computer Science}, \textit{The University of Manchester}, UK,
    \IEEEauthorrefmark{5}\textit{Digital Security Group}, \textit{Radboud University}, Netherlands \\
    Email:  \IEEEauthorrefmark{1}irakliss@kth.se,  \IEEEauthorrefmark{2}dragos.rotaru@esat.kuleuven.be,  \IEEEauthorrefmark{3}mustafa.mustafa@manchester.ac.uk,  \IEEEauthorrefmark{5}b.mennink@cs.ru.nl,  \IEEEauthorrefmark{2}bart.preneel@esat.kuleuven.be,
    \IEEEauthorrefmark{1}papadim@kth.se
    }
}

\maketitle

\begin{abstract}
We propose \system, a scalable, secure, and privacy-enhancing system for users to share and access vehicles. \system\ securely outsources operations of vehicle access token generation to a set of untrusted servers. It builds on an earlier proposal, namely \exprotocol~\cite{DBLP:conf/esorics/SymeonidisAMMDP17}, and extends the system design for improved efficiency and scalability. To cater to system and user needs for secure and private computations, \system\ utilizes and combines several cryptographic primitives with secure multiparty computation efficiently. It conceals secret keys of vehicles and transaction details from the servers, including vehicle booking details, access token information, and user and vehicle identities. It also provides user accountability in case of disputes. Besides, we provide semantic security analysis and prove that \system\ meets its security and privacy requirements. Last but not least, we demonstrate that \system\ is efficient and, in contrast to \exprotocol, scales to a large number of users and vehicles, making it practical for real-world deployments. We build our evaluations with two different multiparty computation protocols: HtMAC-MiMC and CBC-MAC-AES. Our results demonstrate that \system\ with HtMAC-MiMC requires only \boldmath{$\approx \stepBMiMCsingleVehTime$}~ms for generating an access token for a single-vehicle owner and \boldmath{$\approx \stepBMiMCmultiVehTime$}~ms for a large branch of rental companies with over a thousand vehicles. It handles $546$ and $84$ access token generations per second, respectively. This results in \system\ being $\stepBMiMCmoreAT$ (with HtMAC-MiMC) and $\stepBCBCmoreAT$ (with CBC-MAC-AES) times faster compared to in \exprotocol\, for a single-vehicle owner access token generation. Furthermore, we show that \system\ is practical on the vehicle side, too, as access token operations performed on a prototype vehicle on-board unit take only $\approx \stepDtime$~ms.
\end{abstract}

\begin{IEEEkeywords}
Vehicular-systems, sharing-access, security and privacy, smart vehicles, decentralization, accountability
\end{IEEEkeywords}

\IEEEpeerreviewmaketitle

\section{Introduction}
Vehicle-sharing is an emerging smart mobility service leveraging connectivity and modern technology to enable users to share their vehicles with others. Users can book in advance and access a vehicle, where the traditional key distribution is naturally replaced. With the use of in-vehicle telematics and omnipresent \glspl{PD}, such as smartphones, vehicle owners can distribute (temporary) digital vehicle-keys, a.k.a. \glspl{AT}, to other users enabling them to access a vehicle~\cite{DBLP:conf/isc2/SymeonidisMP16}. To support sharing, \glspl{VSS} can effectively facilitate dynamic key distribution at a global scale. They can enable the occasional use of multiple types of vehicles (e.g., cars, motorbikes, scooters), catering to diverse user needs and preferences~\cite{millard2005car,le2014carsharing,FERRERO2018501}. Beyond user convenience and increased usability, by providing better utilization of available vehicles, \glspl{VSS} contribute to sustainable smart cities. This, in turn, leads to positive effects such as a reduction of emissions~\cite{DBLP:journals/tits/MartinS11}, a decrease of city congestion~\cite{shaheen2013}, and more economical use of parking space~\cite{DBLP:journals/computer/NaphadeBHPM11}.

\glspl{VSS} are gaining increased popularity: the worldwide number of users of vehicle-sharing services rose by 170\% from 2012 to 2014 (for a total of 5 million users)~\cite{acea}, while there is a tendency to reach a total of 26 million users by 2021~\cite{bert2016s}.~\footnote{Note that these predictions were made in pre-COVID-19 times.} The Car Connectivity Consortium~\cite{carconnectivity}, an organization of automotive manufactures and smartphone companies, is developing an open standard for ``smartphone-to-car'' services, where a smartphone equipped with digital keys can be used to access vehicles. The SECREDAS EU project~\cite{secredas} proposes a reference architecture for vehicular sharing~\cite{secredas:cybersecurity}, highlighting high-level security and privacy challenges that should be under consideration. The automotive supplier Valeo~\cite{valeo} in collaboration with Orange~\cite{orange} proposes an NFC based solutions for vehicular sharing~\cite{nfcw}. Volvo~\cite{volvo}, BMW~\cite{bmw}, Toyota~\cite{toyota}, Apple~\cite{apple:patent}, and several other companies have been investing in vehicle-sharing services as well. For instance, Apple announced the ``CarKey'' API in the first quarter of 2020, allowing users to (un)lock and start a vehicle using an iPhone or Apple Watch. ``CarKey'' can also be shared with other people, such as family members, enabling vehicle-sharing~\cite{apple:patent,apple:carkey}.

Despite these advantages, a major concern is the \gls{VSS} system security. An adversary may eavesdrop, and attempt to extract the key of a vehicle stored in untrusted devices, tamper with the vehicle sharing details, generate a rogue \gls{AT} to access or deny having accessed a vehicle maliciously. These significant concerns require \glspl{VSS} to deploy security mechanisms; to ensure that vehicle-sharing details cannot be tampered with by unauthorized entities, digital vehicle-keys are stored securely, and attempt to use rogue \glspl{AT} are blocked. Furthermore, it is necessary to address dispute resolution, key revocation (esp. when a user device is stolen)~\cite{gov_uk_reducing_mobile_phone_theft}, and connectivity issues~\cite{TheGuardian:vehicle-offline-connectivity,DBLP:conf/codaspy/DmitrienkoP17}. For dispute resolution, \gls{VSS} users must be accountable while their private information is protected. Current proposals to address these security issues for \gls{VSS} rely on a centralized \gls{SP}~\cite{DBLP:conf/codaspy/BusoldTWDSSS13,DBLP:conf/rfidsec/KasperKOZP13,DBLP:journals/access/WeiYWWD17,DBLP:conf/vehits/GrozaAM17,DBLP:journals/access/GrozaABMG20}, which collects all \gls{VSS} users data for every vehicle sharing-access provision, while having access to the master key of each vehicle~\cite{DBLP:conf/codaspy/DmitrienkoP17}.

However, \gls{VSS} user privacy is equally important, especially with \glspl{VSS} collecting rich personal and potentially sensitive user and vehicle data~\cite{DBLP:conf/itsc/RemeliLAB19}. An adversary may eavesdrop on data exchanges to infer sensitive information about \gls{VSS} users. For example, Enev et al.~\cite{DBLP:journals/popets/EnevTKK16} demonstrated that with 87\% to 99\% accuracy, drivers could be identified by analyzing their 15 minutes-long driving patterns. An adversary may link vehicle-sharing requests, by the same user or for the same vehicle, to deduce vehicle usage patterns and preferences; e.g., sharing patterns, such as time of use, pickup location, duration of use, person(s) a vehicle is shared with~\cite{DBLP:journals/popets/EnevTKK16}. Furthermore, the adversary could infer sensitive information about user health status by identifying vehicles for special-need passengers or their race and religious beliefs~\cite{reddit_ny_cabs}. Such user profiling would be a direct violation of the \gls{GDPR}~\cite{gdpr}. Thus, any \gls{VSS} system needs to preserve user and vehicle requests' unlinkability and keep the user and vehicle identities concealed. Furthermore, vehicle sharing operations such as \gls{AT} generation, update, or revocation should be indistinguishable in \gls{VSS}.  Towards addressing such privacy challenges, the state-of-the-art in \gls{VSS}, \exprotocol~\cite{DBLP:conf/esorics/SymeonidisAMMDP17}, proposes leveraging \gls{MPC} and focuses on privacy-preserving \gls{AT} provision, deploying multiple non-colluding servers for the generation and distribution of vehicle \glspl{AT}.

In a real-world deployment, the number of vehicles-per-user available for sharing could range from few for private individuals to thousand of vehicles for companies or (large) branches of companies~\cite{acea,statistics:num-of-cars-per-carsharing}, e.g., in car-rental scenarios~\cite{munzel2020explaining}. At the same time, the number of users registered with a \gls{VSS} can be highly varying; including all types of users, with access to varying size sets of vehicles. Designing and deploying a \gls{VSS} that serves large numbers of users and large numbers of vehicles is far from straightforward. Security and privacy safeguards significantly affect the performance of a \gls{VSS}, especially so with large numbers of users and vehicles. \exprotocol~\cite{DBLP:conf/esorics/SymeonidisAMMDP17}, although efficient ($1.55$ seconds for access provision based on an owner-single-vehicle evaluation of the protocol), has not been tested in settings replicating real-world deployment: with a large number of vehicles per user. It is paramount to have secure and privacy-preserving \glspl{VSS} that are \emph{scalable}, that is, \glspl{VSS} that remain \textit{efficient} and capable of serving users effectively as the system dimensions (number of users, number of vehicles) grow. Hence, to the best of our knowledge, there is no \gls{VSS} solution in the literature that provides security and privacy guarantees while at the same time being efficient and scalable. This work fills this gap.

\subsubsection*{Contribution} In this work, we present \system, an efficient, scalable, secure, and privacy-enhancing system for vehicle sharing-access provision that supports dispute resolution while protecting user privacy. \system\ is an extension over \exprotocol~\cite{DBLP:conf/esorics/SymeonidisAMMDP17}, and fundamentally differs in certain design choices to make it scalable and more efficient. Specifically, the contributions of this work are:
\begin{enumerate}
    \item \textit{Refined system design for improved security and privacy:} \system\ provides a comprehensive solution to vehicle sharing-access mitigating security and privacy issues considering untrusted \glspl{SP}. It deploys \acrfull{MPC} and several cryptographic primitives to ensure that \glspl{AT} are generated so that no \gls{VSS} entity other than users and vehicles learn the vehicle-sharing details. Vehicle secret keys stay oblivious towards the untrusted \gls{VSSP} although used for the \glspl{AT} generation. With the use of a \gls{PL} and anonymous communication channels~\cite{torproject} combined, \system\ also ensures the unlinkability of any two user requests, the anonymity of users, and vehicles and the indistinguishability between the \gls{AT} generation, update, and revocation operations. It also supports dispute resolution without compromising user private information while keeping users accountable. 
    
    
    \item \textit{Supporting efficiency and scalability:} We choose cryptographic primitives and underlying \gls{MPC} protocols to i) minimize the number of non-linear operations in a circuit and its circuit depth of \gls{MPC} protocols, and ii) enable parallelization of cryptographic evaluations over \gls{MPC}. For instance, optimization of \gls{MPC} consists of substituting the \gls{MAC} used in~\cite{DBLP:conf/esorics/SymeonidisAMMDP17} by an Enc-then-MAC mode. Performing \gls{MAC} operations directly on secretly shared data over \gls{MPC} is costly, as non-linear operations are the main constraint in the performance of \gls{MPC}. Instead, encrypting a message over \gls{MPC}, revealing the output, and applying \gls{MAC} to the output result in a significantly faster solution. This enables \system\ to remain efficient with multiple vehicles per user, showing a significant performance gain over \exprotocol~\cite{DBLP:conf/esorics/SymeonidisAMMDP17}. We use AES-CBC-MAC for the Boolean case and an HtMAC mode for the arithmetic case with the respective field. The latter allows parallelization, and the benchmark results in an efficient solution as HtMAC requires fewer communication rounds. These improvements are tailored towards scalable \glspl{VSS}.
    
    \item \textit{Formal semantically secure analysis:} We prove that \system\ is secure and meets its appropriate security and privacy requirements. We provide a detailed semantic security analysis overall and per security and privacy requirements extending security proofs to include the refined design and changes of the cryptographic primitives advancing \exprotocol~\cite{DBLP:conf/esorics/SymeonidisAMMDP17}.
       
    \item \textit{Improved implementation and benchmarking including a prototype OBU:} Unlike~\cite{DBLP:conf/esorics/SymeonidisAMMDP17}, we implement \system\ with the fully-fledged open-sourced \gls{MPC} framework MP-SPDZ \cite{DBLP:conf/ccs/Keller20}. The parties run an optimized virtual machine for the execution of the protocol. For comparison, we test and evaluate \system\ using two \gls{MPC} instantiations for Boolean and arithmetic circuits. Our performance evaluation demonstrates that \system\ can be highly efficient even for users with thousand of vehicles, hence making it ready for real-world deployment. Its significant improvement shows that it requires only \boldmath{$\approx \stepBmacCBCtime$}~ms for a owner-single-vehicle \gls{AT} generation (\boldmath{$\stepBCBCmoreAT$} times faster compared to~\cite{DBLP:conf/esorics/SymeonidisAMMDP17}). Simultaneously, it can handle multiple \gls{AT} generations per second (\boldmath{$\approx \stepBMiMCmultiVehATs$}~\glspl{AT}/s) for owner-multi-vehicles individuals and (branches of) rental companies, resulting in an efficient and scalable solution that is ready for real-world deployment. Furthermore, we implement the \gls{AT} verification on a prototype \gls{OBU}, demonstrating that \system\ is practical on the vehicle side too.
\end{enumerate}


The rest of the paper is organized as follows: Section~\ref{sec:system_model} provides the system model and preliminaries on \system. Section~\ref{sec:crypto_build_blocks} describes the cryptographic building blocks used in \system. Section~\ref{sec:system} describes the system in detail and Section~\ref{sec:extended_analysis} provides the security and privacy analysis of \system. Section~\ref{sec:protocol_evaluation} evaluates its performance and complexity, and demonstrates its efficiency and scalability. Section~\ref{sec:related_work} gives an overview of the state-of-the-art related work. Section~\ref{sec:conclusion} concludes our work.



\begin{table*}[]
\centering
\caption{Notation.}
\label{table:notations}
\resizebox{\textwidth}{!}{%
    \begin{tabular}{l}
        \raisebox{-\totalheight}{\includegraphics{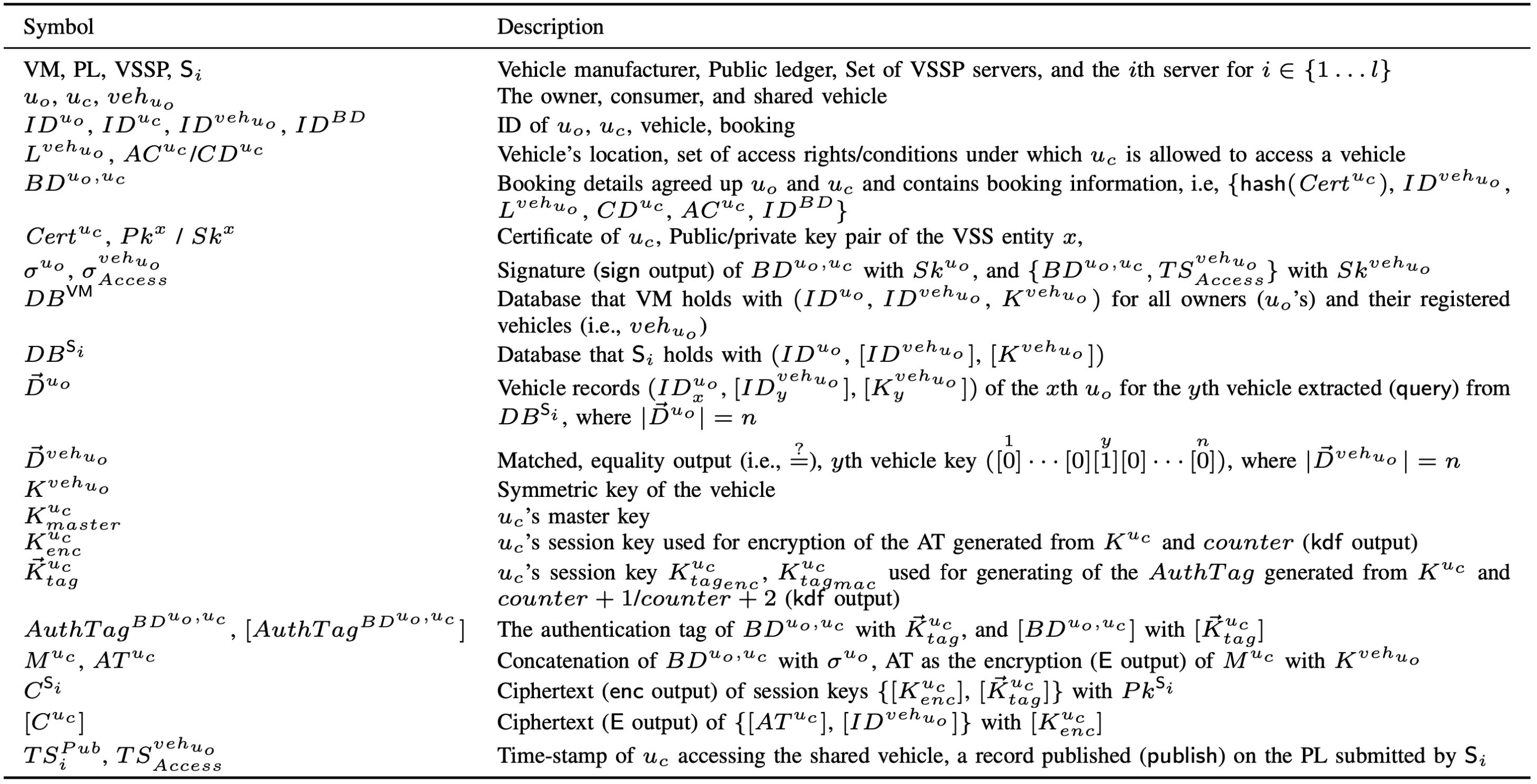}}
    \end{tabular}%
}
\end{table*}

\section{System, Adversarial Models and Requirements}\label{sec:system_model}
We outline a system model of \glspl{VSS}, along with the adversarial model. We provide the functional, security, privacy, and performance requirements any secure and privacy-enhancing \gls{VSS} needs to satisfy.
    
    \begin{figure}[t]
        \centering
            \resizebox{\columnwidth}{!}{%
            \includegraphics{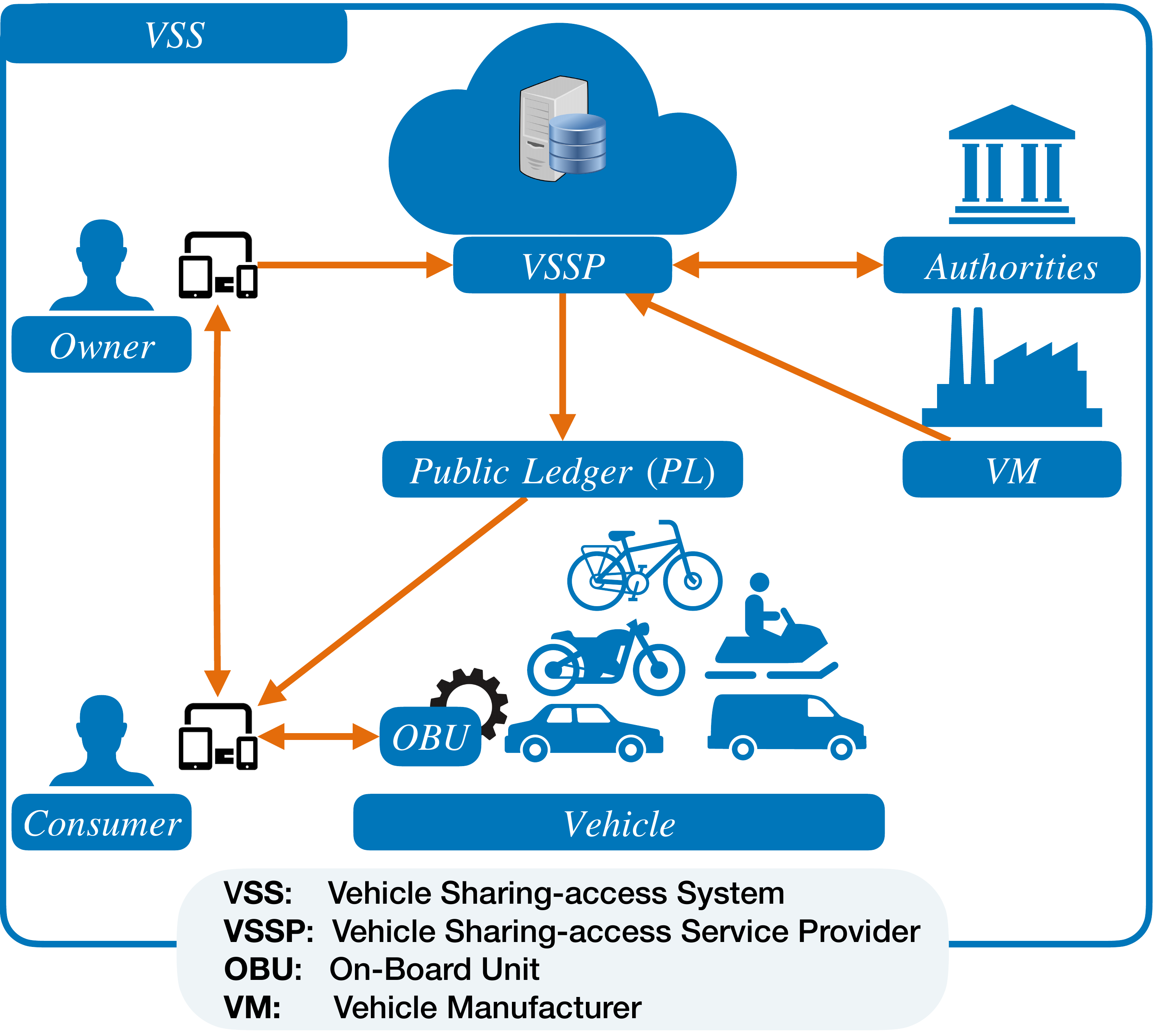}}
            \caption{\acrfull{VSS} model. } 
        \label{VSS}
    \end{figure}
    
    \subsection{System model} A \gls{VSS} is comprising of users, vehicles, vehicle-manufacturers, and authorities as it is illustrated in Fig.~\ref{VSS}. We consider two types of users: \textit{owners} ($u_o$), individuals or vehicle rental companies willing to share (rent out) their vehicles, and \textit{consumers} ($u_c$), individuals using vehicles available for sharing; both use \acrfullpl{PD}, such as smartphones, to interact with \gls{VSS} entities and each other. The \gls{OBU} is a hardware/software component that enables vehicle connectivity~\cite{nexcom:vtc6201-ft,preserve,ivners}. It is equipped with a secure wireless interface for short-range (e.g., NFC, or Bluetooth) or over the Internet interface (e.g., cellular data) for accessing securely the \textit{vehicle}. The \gls{VM} is responsible for managing the digital keys that enable access into each vehicle. These keys are used for enabling vehicle sharing-access in \gls{VSS} as well. The \gls{VSSP} is a cloud infrastructure that facilitates the vehicle \gls{AT} generation, distribution, update, and revocation. It consists of \textit{servers} that collaboratively generate \glspl{AT} and publish them on the \gls{PL}, a secure public bulletin board~\cite{micali2016algorand}. Prior to each vehicle sharing-access session, the owner and consumer agree on the booking details. We denote $BD^{u_o, u_c}$, $AT^{veh_{u_o}}$, and $K^{veh_{u_o}}$ as \acrfull{BD}, an \acrfull{AT} for a vehicle, and for the vehicle secret key, respectively.

    \subsection{Assumptions} We assume the existence of secure and authenticated communication over all channels between entities at \gls{VSS}, e.g., by using SSL-TLS~\cite{rfc8446} or NFC. 
   There is a \gls{PKI} in place (e.g.,~\cite{DBLP:journals/tits/KhodaeiJP18}), and each entity has a digital certificate and a corresponding private/public-key pair. The \gls{VSSP} servers are managed by organizations with conflicting interests, such as user unions, \glspl{VM}, and authorities. Thus, these are non-colliding organizations. The intra-\gls{VSSP} communication is considered to be up to a 10Gb/s network. The \gls{OBU} has an embedded \gls{HSM}~\cite{automotive_tpm} that enables secure execution environment and key storage. Before each evaluation, the \gls{BD} are agreed upon by the owner and consumer. Both keep the \gls{BD} confidential against external parties. The \gls{BD} contains the owner, consumer, and vehicle identities and the location and time duration of the reservation.

   \subsection{Adversarial Model} The \gls{VSSP}, the \gls{PL}, and the \gls{VM} are passive adversaries, i.e., honest-but-curious in our case. They execute the protocol correctly, but they may attempt to deduce user private information. The owners can be passive adversaries, as they hold information about the booking, but they will not deviate from the protocol. Consumers and outsiders can be active adversaries aiming to illegally access a vehicle, alter the booking information, and hide incidents. Authorities are trusted entities only for specific transactions in case of disputes. The vehicle, specifically its \gls{OBU}, is trusted and tamper-evident designed to resist accidental or deliberate physical destruction (i.e., it serves as an event data recorder equipped with software and hardware security mechanisms~\cite{automotive_tpm}). User \glspl{PD} are untrusted as they can get stolen, broken, or lost. Relay attacks, which can be tackled with distance bounding protocols~\cite{DBLP:conf/eurocrypt/BrandsC93}, are left out of the scope of this paper.

    \subsection{System Design Requirements} We detail functional, security, privacy, and performance requirements that a \gls{VSS} should satisfy, denoted \emph{FR}, \emph{SR}, \emph{PR}, and \emph{ESR}, respectively. The list builds on the requirements specified in~\cite{DBLP:conf/isc2/SymeonidisMP16}, extending the ones of \exprotocol~\cite{DBLP:conf/esorics/SymeonidisAMMDP17}.
    
    \textit{Functional requirements:}
    
    \begin{itemize}
    
    \item \textit{FR\fr -- Offline vehicle access.} Vehicle access should be supported in locations with no (or limited) network connectivity.
    
    \item \textit{FR\fr --  \acrfull{AT} update and revocation by the owner $u_o$.} No-one except the owner, $u_o$, can initiate an \gls{AT} update or revocation.
    
    \end{itemize}
    
    \textit{Security requirements:}
    
    \begin{itemize}
    \item \textit{SR\sr -- Confidentiality of \acrfull{BD}, $BD^{u_o, u_c}$.} No-one except the owner $u_o$, consumer $u_c$, and the shared vehicle $veh_{u_o}$ should access $BD^{u_o, u_c}$.
    
    \item \textit{SR\sr -- Entity and data authenticity of $BD^{u_o, u_c}$ from the owner $u_o$.} The origin and integrity of the \gls{BD}, $BD^{u_o, u_c}$, by the owner $u_o$ should be verified by the shared vehicle $veh_{u_o}$.
    
    \item \textit{SR\sr -- Confidentiality of $AT^{veh_{u_o}}$.} No-one except the consumer $u_c$ and the shared vehicle $veh_{u_o}$ should access $AT^{veh_{u_o}}$.
    
    \item \textit{SR\sr -- Confidentiality of vehicle key, $K^{veh_{u_o}}$.} No-one except the \gls{VM} and the shared vehicle $veh_{u_o}$ should hold a copy of vehicle's key $K^{veh_{u_o}}$. 
    
    \item \textit{SR\sr -- Backward and forward secrecy of $AT^{veh_{u_o}}$.} Compromise of a session key used to encrypt any $AT^{veh_{u_o}}$ should not compromise future and past \glspl{AT} published on \gls{VSS} e.g., on the \gls{PL}, for any honest consumer $u_c$.
    
    \item \textit{SR\sr -- Non-repudiation of origin of $AT^{veh_{u_o}}$.} The owner $u_o$ should not be able to deny agreeing on \gls{BD} terms, $BD^{u_o, u_c}$, or deny initiating the corresponding \gls{AT} generation operation for $AT^{veh_{u_o}}$.
    
    \item \textit{SR\sr -- Non-repudiation of $AT^{veh_{u_o}}$ receipt by $veh_{u_o}$ at $u_o$.} The consumer $u_c$ should not be able to deny receiving and using the $AT^{veh_{u_o}}$ to open and access the $veh_{u_o}$ (once it has done so).
    
    \item \textit{SR\sr -- Accountability of users (i.e., owner $u_o$ and consumer $u_c$).} On a request of law enforcement, \gls{VSSP} should be able to supply authorities with the vehicle-access transaction details without compromising the privacy of other users.
    
    \end{itemize}
    
    \textit{Privacy requirements:}
    
    \begin{itemize}
    
    \item \textit{PR\pr -- Unlinkability of (any two) requests of any consumer, $u_c$, and the vehicle, $veh_{u_o}$(s).} No-one except the onwer $u_o$, the consumer $u_c$, and the shared vehcile $veh_{u_o}$ should be able to link two booking requests of any consumer $u_c$ and for any shared vehicle $veh_{u_o}$ linking their identities, i.e.,  $ID^{u_c}$, and $ID^{veh_{u_o}}$.
    
    \item \textit{PR\pr -- Anonymity of any consumer, $u_c$, and vehicle, $veh_{u_o}$.} No-one except the owner $u_o$, the consumer $u_c$, and the shared vehicle $veh_{u_o}$ should learn the identity of $u_c$ and $veh_{u_o}$.
    
    \item \textit{PR\pr -- Indistinguishability of $AT^{veh_{u_o}}$ operations.} No-one except the owner $u_o$, the consumer $u_c$ and the vehicle $veh_{u_o}$, should be able to distinguish between operations of generation, update and revocation of the \gls{AT}, $AT^{veh_{u_o}}$.
    
    \end{itemize}
    
    \textit{Performance requirement:}
    \begin{itemize}
        \item \textit{ESR\psr -- Efficiency and scalability in a real-world deployment.} The \gls{VSS} should remain capable of efficiently and effectively servicing users, as their numbers and the numbers of vehicles per user increase to levels required for real-world deployment.
    \end{itemize}

\section{Cryptographic Building Blocks}\label{sec:crypto_build_blocks}

\begin{figure*}[ht]
    \centering
    \resizebox{\textwidth}{!}{%
\includegraphics{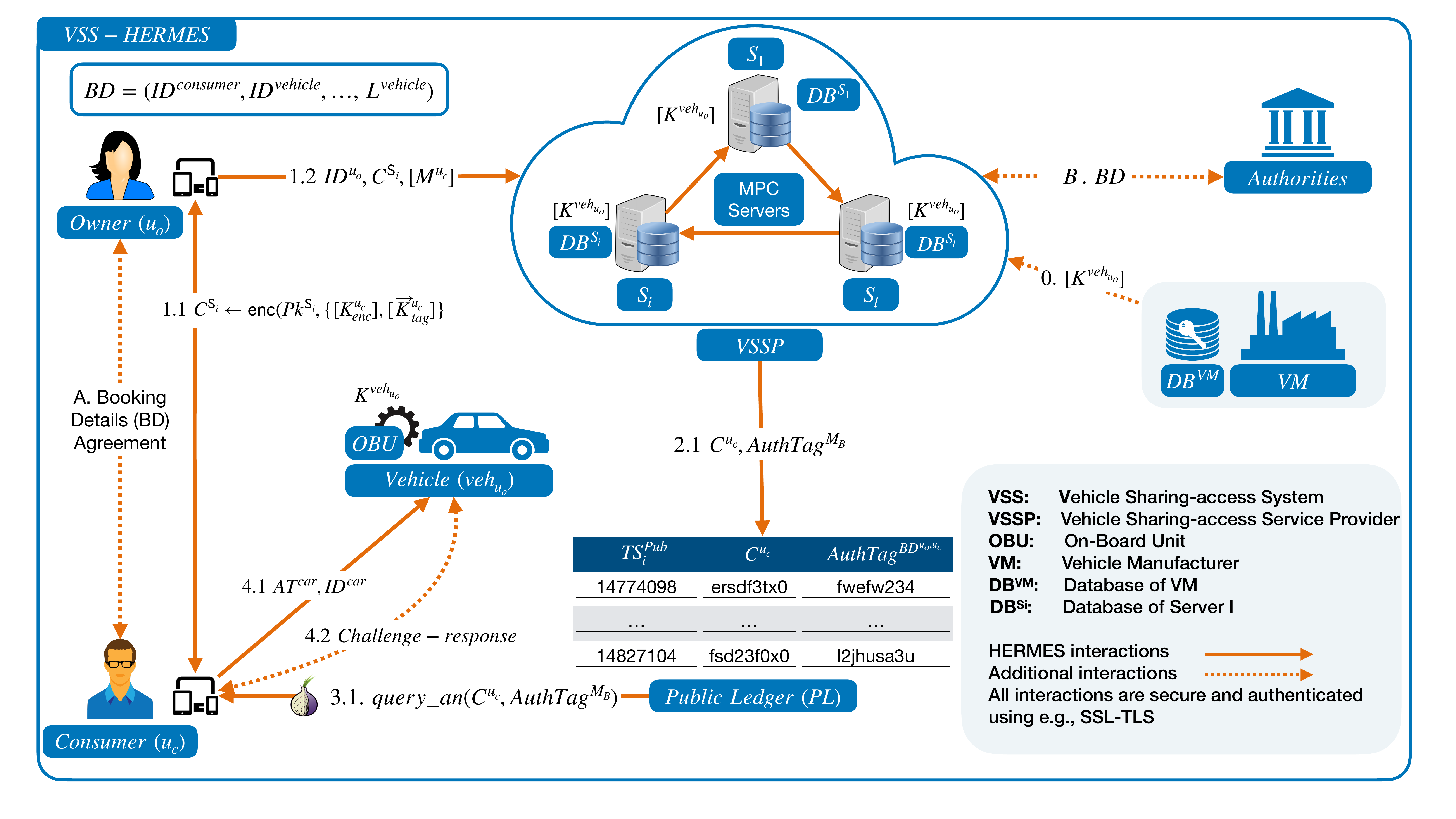}
}
    \caption{\system\ high level overview. Numbers correspond to the steps outlined in the text of Section~\ref{sec:system}. Figures \ref{fig:step1}, \ref{fig:step2}, \ref{fig:step3} and \ref{fig:step4} describe Steps~1, 2, 3 and 4 in more detail.}
    \label{fig:vss_overview}
\end{figure*}

\subsubsection{Cryptographic Primitives}
\system\ uses cryptographic building blocks, as described below. For each of the building blocks, we provide concrete instantiations we use in our proof-of-concept implementation detailed in Section~\ref{sec:protocol_evaluation}.
\begin{itemize}
    \item \textit{Signagure scheme:} $\sigma \leftarrow \mathsf{sign}(Sk,m)$ and $\mathsf{true}/\mathsf{false} \leftarrow \mathsf{verify}(Pk, m,\sigma)$ are public-key operations for signing and verification respectively. These can be implemented using RSA, as defined in the PKCS $\#1$ v2.0 specification~\cite{rfc_2437}.
    \item \textit{Key derivation function:} $K \leftarrow \mathsf{kdf}(K, counter)$ is a key derivation function using a master key and a counter as inputs. It can be based on a \gls{PRF} and implemented using CTR mode with AES~\cite{barker2012nist}.~\footnote{In our case, the message input is small, i.e., $\ll 2^{64}$ blocks for AES in CTR, and the generation is performed with side channel attacks not to be a concern~\cite{DBLP:conf/sp/CohneyKPGHRY20}.}
    \item \textit{Public key encryption/decryption:} $c \leftarrow \mathsf{enc}(Pk,m)$ and $m \leftarrow \mathsf{dec}(Sk,c)$ are encryption and decryption functions based on public key primitives. These can be implemented using RSA, as defined in the RSA-KEM specifications~\cite{rfc_5990}.
    \item \textit{Symmetric key encryption/decryption:} $c \leftarrow \mathsf{E}(K,m)$ and $m \leftarrow \mathsf{D}(K,c)$ are encryption and decryption functions based on symmetric key primitives. These can be implemented using AES in CTR mode.
    \item \textit{Cryptographic hash:} $z \leftarrow \mathsf{hash}(m)$ it is a message digest function. This can be SHA-2 or SHA-3.
    \item \textit{Message Authentication Code:} $t \leftarrow \mathsf{mac}(k, m)$ is a cryptographic \gls{MAC} that outputs an authentication tag, $t$, given a message $m$ and a key $k$. These can be implemented using CBC-MAC-AES or HtMAC-MiMC.
\end{itemize}
Furthermore, we use $z \leftarrow \mathsf{query}(x,y)$ to denote the retrieval of the $x$th value from the $y$th database $DB$ (to be defined in Sect.~\ref{sec:system}), and $z \leftarrow \mathsf{query\_an}(y)$ to denote the retrieval of the $y$th value from the \gls{PL} through an anonymous communication channel such as Tor~\cite{torproject}, aiming to anonymously retrieve a published record (e.g., \gls{AT}) submitted using the $\mathsf{publish}(y)$ function.

\subsubsection{Multiparty Computation}
MPC allows a set of parties to compute a function over their inputs without
revealing them. To evaluate a function on secret inputs using \gls{MPC}, one needs to unroll the function to a series of additions and multiplications in a field. Following the seminal papers of Yao for the two-party
case~\cite{DBLP:conf/focs/Yao86} and by Goldreich, Micali and Wigderson in the
multiple parties setting~\cite{micali1987play}, secure \gls{MPC} has gained
much traction in the past years with many open-source
frameworks~\cite{hastings2019sok}.


Our algorithms use building blocks whose instantiation depends on the protocol type. However, they can be treated generically.
This is also called an arithmetic black-box functionality~\cite{damgaard2003universally}. The functionality mainly in use consists of:
\begin{itemize}
    \item \textit{Secret sharing function:} $[x] \leftarrow \mathsf{share}(x)$ is a function that inputs $x$ and outputs $[x]$ in secret shared form to all parties. The underlying secret sharing scheme is described in Araki et al.~\cite{DBLP:conf/ccs/ArakiFLNO16}.
    \item \textit{Shares reconstruction:} $x \leftarrow \mathsf{open}([x])$ which takes a secret shared value $[x]$ and opens it, making $x$ known to all parties.
    \item \textit{Equality check:} $[z] \leftarrow ([x] \stackrel{?}{=} [y])$ outputs a secret bit $[z]$ where $z \in \{0, 1\}$. If $x$ is equal to $y$ then set $z \leftarrow 1$ otherwise set $z \leftarrow 0$. Note that for the large field case there is a statistical security parameter $\mathsf{sec}$, whereas for the $\mathbb{F}_2$ case the comparison is done with perfect security (i.e. no $\mathsf{sec}$ parameter). The equality operator is implemented using the latest protocols of Escudero et al.~\cite{DBLP:conf/crypto/0001GKRS20}.
    
    \item $c \leftarrow \mathsf{E}([k], [m])$ An encryption function, i.e., $\mathsf{E}$, takes as inputs a secret shared key $[K]$ and a vector of $128$ bit blocks $[m]$. For the $\mathbb{F}_2$ case, $\mathsf{E}$ is implemented using AES in CTR mode. Concretely, the AES circuit description is the one from SCALE-MAMBA~\cite{aly2020scale}, which has $6400$ AND gates. For the $\mathbb{F}_p$ case, MiMC is used as a \gls{PRF} in counter mode as presented in~\cite{rotaru2017modes} to take advantage of \gls{PRF} invocations done in parallel.
    \item $t \leftarrow \mathsf{mac}([k], [m])$ is a
    tag generation function for secret shared key $[k]$ and message $[m]$.
    For the case when inputs are in a large field, we will not compute the \gls{MAC} as above, but rather as $\mathsf{mac}([k], \mathsf{E}([k'], [m]))$. The reason is that, according to~\cite{rotaru2017modes}, we can obtain a more efficient cryptographic \gls{MAC} in MPC by first computing $\mathsf{E}([k'], [m])$ in parallel with a secret shared key $[k']$, opening the result, and evaluate the \gls{MAC} function in the clear.
    Their optimizations hold only for arithmetic
    circuits with HtMAC over a large field~\cite{chida2018fast}, although they could likely be extended to Boolean circuits as well.
    In the Boolean case,
    the $\mathsf{mac}$ function is implemented as CBC-MAC-AES. Note that for the
    $\mathbb{F}_2$ case there are more efficient ways to do this, but we keep
    CBC-MAC as a comparison baseline to \exprotocol~\cite{DBLP:conf/esorics/SymeonidisAMMDP17}.

\end{itemize}

\section{\system}\label{sec:system}
In this section, we present \system\ in detail. We provide the complete system description; its entities, the functional and cryptographic operations performed and messages exchanged (see Fig.~\ref{fig:prot} in Appendix~\ref{appendix:protocol_complete}). Prior to explaining \system\ in detail, we provide a brief description overview as an introduction.

\subsection{Overview of \system}
We consider a single owner, a single consumer, and a set of shared vehicles for simplicity in presentation, without loss of generality. There are two prerequisite steps: \textit{vehicle key distribution (Step~A)} and establishing the details for the \textit{vehicle booking (Step~B)}. In a nutshell, as vehicle owners register their vehicles, the \gls{VSSP}, using the owner identity, retrieves the vehicle identity and the corresponding key from \gls{VM} in Step~A. Note that the \gls{VM}, a trusted \gls{SP} for \gls{VSS}, holds all the secret keys of vehicles. Both the identity and the vehicle key are transferred from \gls{VM} to \gls{VSSP} in a secret-shared form~\cite{DBLP:conf/ccs/ArakiFLNO16}, that is indistinguishable from randomness~\cite{DBLP:journals/cacm/Shamir79}. Thus, there is nothing the \gls{VSSP} can deduce from the vehicle identity and corresponding digital master key. For each initialization of \system, the \gls{BD} was specified between the owner and the consumer, tailored to each vehicle sharing agreement. During \textit{vehicle booking} in Step~B, the owner and consumer specified the identity of the vehicle from the pool of vehicles, the duration of the reservation, the access rights, and location.~\footnote{Note that \system\ is agnostic to the specificities of \gls{BD} drawing from the analogy in \gls{VSS} from car-rental scenarios.}


\system\ consists of four main steps: \textit{session key generation and \gls{BD} distribution} (Step~1), \textit{\acrfull{AT} generation} (Step~2), \textit{\acrfull{AT} distribution and verification} (Step~3), and \textit{vehicle access} (Step~4). During the \textit{session key generation and data distribution} in Step~1, the consumer generates three session keys. One of these session keys is used to \textit{encrypt} the generated \gls{AT} at the \gls{VSSP} servers, so that only the consumer has access to it. The two other session keys are used to generate an \textit{authentication tag} of the \gls{BD}, such that only the consumer can identify and retrieve the \gls{AT} from the \gls{PL} as well as verify that the beforehand agreed \gls{BD} is included in the \gls{AT}. As the consumer considers the owner and the \gls{VSSP} as honest-but-curious entities, the consumer conceals the three-session keys before forwarding them -- the keys are transformed in secret shared form~\cite{DBLP:conf/ccs/ArakiFLNO16}. Moreover, to protect its identity, the consumer avoids direct communication with \gls{VSSP} by forwarding the shares of session keys to the owner. The owner then forwards to the \gls{VSSP} the \gls{BD} and its signature in a shared form, together with the concealed session keys, to each $\mathsf{S}_i$ server of \gls{VSSP}. Once each $\mathsf{S}_i$ of \gls{VSSP} receives the shares of the session keys and the booking details, the \textit{\acrfull{AT} generation}, Step~2, commences. The vehicle key is retrieved from the \gls{DB} in each $\mathsf{S}_i$ server, using an equality check over \gls{MPC}, thus preserving the key secrecy. The \gls{AT} is generated by encrypting the \gls{BD} and its signature with the vehicle key, such that only the vehicle itself can retrieve them. Moreover, the session keys, generated by the consumer, are used to encrypt the \gls{AT}, and also create an \textit{authentication tag}, such that only the consumer can identify and access the \gls{AT}. Each of the servers, $\mathsf{S}_i$, then forwards the encrypted \gls{AT} and its authentication tag to the \gls{PL}. The \gls{PL} serves as a bulletin board and notifies the \gls{VSSP} once it publishes the information. At the \textit{\acrfull{AT} distribution and verification}, Step~3, the consumer can identify and retrieve the corresponding \gls{AT}. As the consumer considers the \gls{PL} as honest-but-curious, it hides its identity (i.e., IP address) by querying the \gls{PL} using an anonymous communication channel such as Tor~\cite{torproject}. The consumer then retrieves the \gls{AT}, to be used by the vehicle, to verify and allow access to the consumer for the predefined booking duration of \textit{vehicle access} at Step~4.

\subsection{\system\ in Detail}
We first describe the \textit{prerequisite} steps. We detail the core operations in four steps. Table~\ref{table:notations} lists the notation used throughout the paper. 


\subsubsection*{Prerequisite steps} Before \system\ commences, two prerequisite steps are necessary: \textit{vehicle key distribution} and establishing the details for booking, i.e., \textit{vehicle booking}. 

\paragraph*{Step A - Vehicle key distribution} This step takes place immediately after the $x$th owner, $ID^{u_o}_x$, registers her $y$th vehicle, $ID^{veh_{u_o}}_y$, with the \gls{VSSP}. The \gls{VSSP} request from the \gls{DB} of \gls{VM}, $DB^{VM}$, the secret symmetric key of the vehicle, $K^{veh_{u_o}}_y$, and the corresponding identity of the owner, $ID^{veh_{u_o}}_y$, i.e.,

\begin{equation*}
    DB^{VM} = 
    \begin{pmatrix}
        ID^{u_o}_1 & ID^{veh_{u_o}}_1 & K^{veh_{u_o}}_1 \\
        \vdots & \vdots & \vdots \\
        ID^{u_o}_x& ID^{veh_{u_o}}_y & K^{veh_{u_o}}_y \\
        \vdots & \vdots & \vdots \\
        ID^{u_o}_m& ID^{veh_{u_o}}_n & K^{veh_{u_o}}_n \\
    \end{pmatrix}.
\end{equation*}

Then, \gls{VM} replies and \gls{VSSP} retrieves these values in secret shared form, denoted by $[K^{veh_{u_o}}_y]$ and $[ID^{veh_{u_o}}_y]$, respectively. It stores, $ID^{u_o}_x$, $[ID^{veh_{u_o}}_y]$ and $[K^{veh_{u_o}}_y]$ in its \gls{DB} denoted $DB^{\mathsf{S}_i}$, i.e.,

\begin{equation*}
   DB^{\mathsf{S}_i} = 
    \begin{pmatrix}
        ID^{u_o}_1 & [ID^{veh_{u_o}}_1] & [K^{veh_{u_o}}_1] \\
        \vdots & \vdots & \vdots \\
        ID^{u_o}_x& [ID^{veh_{u_o}}_y] & [K^{veh_{u_o}}_y] \\
        \vdots & \vdots & \vdots \\
        ID^{u_o}_m& [ID^{veh_{u_o}}_n] & [K^{veh_{u_o}}_n] \\
    \end{pmatrix}.
\end{equation*}

For simplicity, we use the $ID^{u_o}$, $ID^{veh_{u_o}}$ and $K^{veh_{u_o}}$ instead of $ID^{u_o}_x$, $ID^{veh_{u_o}}_y$ and $K^{veh_{u_o}}_y$ throughout the paper.

\paragraph*{Step B - Vehicle booking} This step allows the owner and consumer to agree on the \gls{BD} before \system\ commences. In specific, $u_o$ and $u_c$ to agree on the booking details, i.e., $ BD^{u_o, u_c} = \{\mathsf{hash}(\mathit{Cert}^{u_c})$, $ID^{veh_{u_o}}$, $L^{veh_{u_o}}$, $CD^{u_c}$, $AC^{u_c}$, $ID^{BD}\}$, where $\mathsf{hash}(\mathit{Cert}^{u_c})$ is the hash value of the digital certificate of $u_c$, $L^{veh_{u_o}}$ is the pick-up location of the vehicle, $CD^{u_c}$ is the set of conditions under which $u_c$ is allowed to use the vehicle (e.g., restrictions on locations, time period), $AC^{u_c}$ are the access control rights based on which $u_c$ is allowed to access the vehicle, and $ID^{BD}$ is the booking identifier. 




\subsubsection*{\system\ operations in four steps}


\begin{figure*}[hbt!]
    \centering
        \resizebox{0.95\textwidth}{!}{%
        \includegraphics{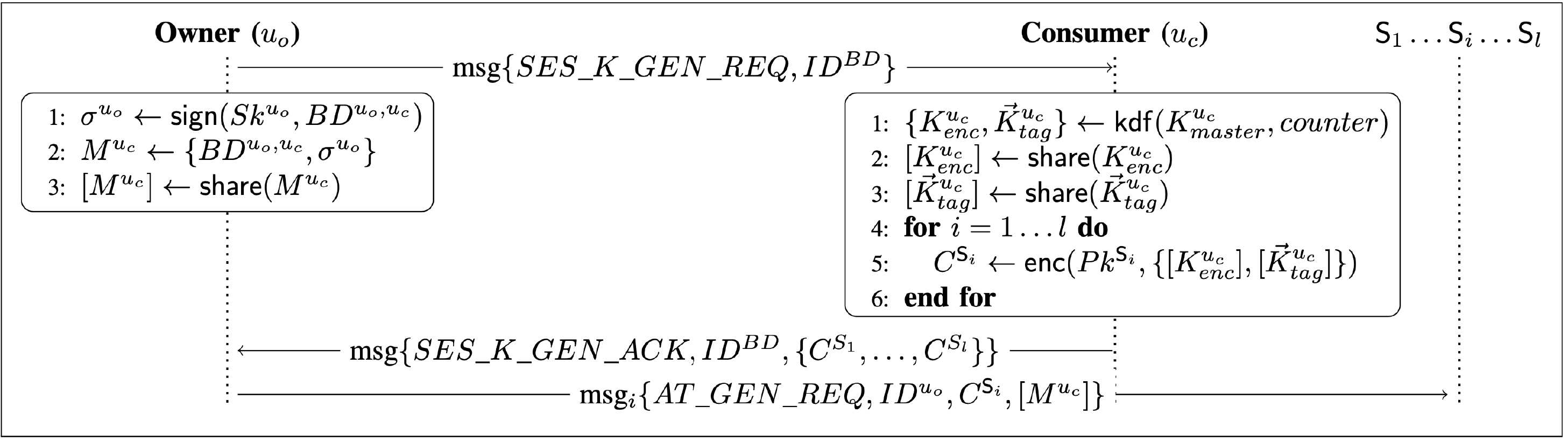}}
        \caption{Step 1: session key generation and \gls{BD} distribution.} 
    \label{fig:step1}
\end{figure*}

\paragraph*{Step 1 -- Session key generation and \gls{BD} distribution}\label{step1}
While $u_o$ signs the booking details, $BD^{u_o, u_c}$, $u_c$ generates session keys for encryption and data authentication, i.e., $K^{u_c}_{enc}$ and $\vec{K}^{u_c}_{tag}=(K^{u_c}_{tag_{mac}},K^{u_c}_{tag_{enc}})$, respectively. The generated material by $u_c$ and $u_o$ are sent to each $\mathsf{S}_i$ via $u_o$. These will be used for the generation of the \gls{AT}.

In detail, as depicted in Fig.~\ref{fig:step1}, $u_o$ sends a request for \textit{session-key-generation}, \textit{SES\_K\_GEN\_REQ}, together with $ID^{BD}$ to $u_c$. Once it receives the request, $u_c$ generates the session keys, $K^{u_c}_{enc}$ and $\vec{K}^{u_c}_{tag}$. $K^{u_c}_{enc}$ is used by the \gls{VSSP} servers, $\mathsf{S}_i$, to encrypt the \gls{AT}, and ensure that only $u_c$ has access to it. Note that each $\mathsf{S}_i$ does encryption evaluations in a secret shared way. $\vec{K}^{u_c}_{tag}$ is used to generate an authentication tag, allowing $u_c$ to verify that \gls{AT} contains $BD^{u_o, u_c}$ agreed upon during the \textit{vehicle booking}. It utilizes a $\mathsf{kdf}()$ function with $u_c$'s master key as an input, i.e., $K^{u_c}_{master}$ and a $counter$. For $\vec{K}^{u_c}_{tag}$, two session keys are generated and stored: one for encryption, $K^{u_c}_{tag_{enc}}$ (i.e., $\vec{K}^{u_c}_{tag}[0] = K^{u_c}_{tag_{enc}}$), and one for authentication, $K^{u_c}_{tag_{mac}}$ (i.e., $\vec{K}^{u_c}_{tag}[1] = K^{u_c}_{tag_{mac}}$). Then, $u_c$ constructs $\ell$ secret shares of $[K^{u_c}_{enc}]$ and $[\vec{K}^{u_c}_{tag}]$, one for each $\mathsf{S}_i$. This ensures that none of the servers alone has access to these session keys. Nonetheless, they can jointly perform evaluations utilizing the shares of these keys.



The consumer encrypts $[K^{u_c}_{enc}]$ and $[\vec{K}^{u_c}_{tag}]$ with the public-key of each $\mathsf{S}_i$, $C^{\mathsf{S}_i} = \mathsf{enc}(Pk^{\mathsf{S}_i},\{[K^{u_c}_{enc}], [\vec{K}^{u_c}_{tag}]\})$. It ensures that only the specific $\mathsf{S}_i$ can access the corresponding shares. Finally, $u_c$ forwards to $u_o$ an acknowledgment message, \textit{SES\_K\_GEN\_ACK}, along with $ID^{BD}$ and $\{C^{\mathsf{S}_1}, \dots, C^{\mathsf{S}_l}\}$. The owner, $u_o$, signs $BD^{u_o, u_c}$ with her private key, i.e., $\sigma^{u_o} = \mathsf{sign}(Sk^{u_o},BD^{u_o, u_c})$. In a later stage, the vehicle will use $\sigma^{u_o}$ to verify that $BD^{u_o, u_c}$ was approved by $u_o$. Then $u_o$ transforms $M^{u_c}=\{BD^{u_o, u_c},\sigma^{u_o}\}$ into $\ell$ secret shares, i.e., $[M^{u_c}]$. Upon receipt of the response of $u_c$, $u_o$ forwards to each $\mathsf{S}_i$ an access-token-generation request, \textit{AT\_GEN\_REQ}, along with $ID^{u_o}$, the corresponding $C^{\mathsf{S}_i}$ and $[M^{u_c}]$.



\begin{figure*}[hbt!]
    \centering
        \resizebox{0.95\textwidth}{!}{%
        \includegraphics{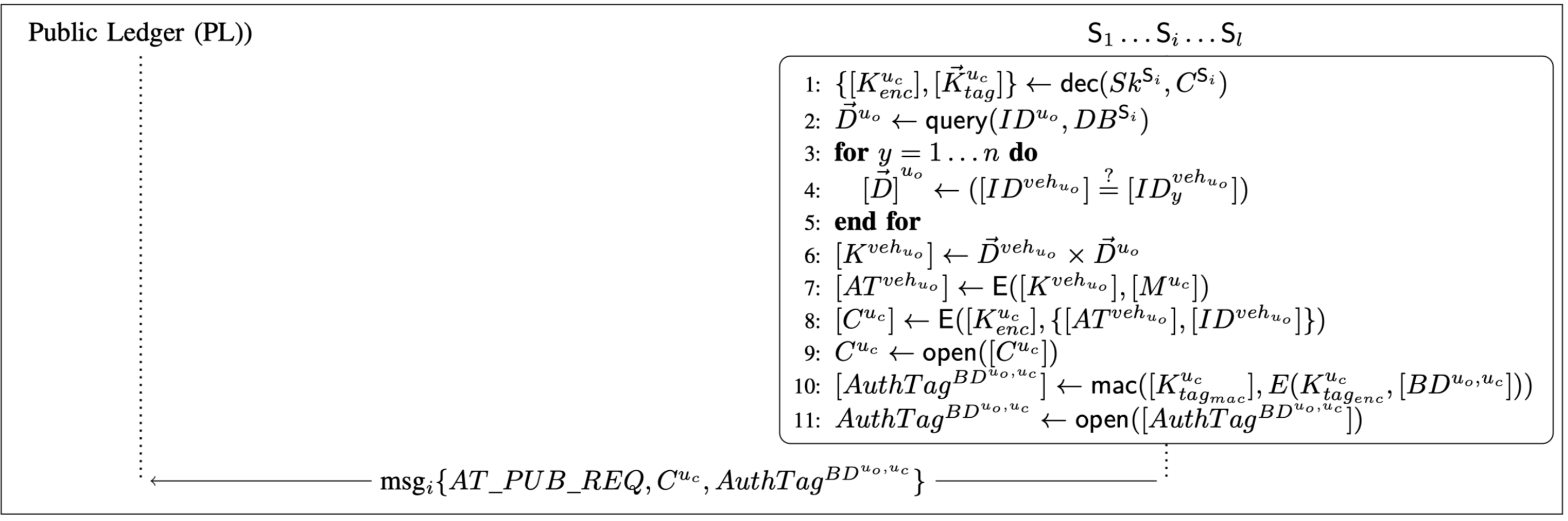}}
        \caption{Step 2: \gls{AT} generation.} 
    \label{fig:step2}
\end{figure*}

\paragraph*{Step 2 -- \acrfull{AT} generation}\label{step2}
The servers generate an \gls{AT} and publish it on the \acrfull{PL}. 

In detail, as depicted in Fig.~\ref{fig:step2}, after receiving the \textit{AT\_GEN\_REQ} from $u_o$, the servers obtain the session key shares, $\{[K^{u_c}_{enc}], [\vec{K}^{u_c}_{tag}]\}$. Each $\mathsf{S}_i$ decrypts $C^{\mathsf{S}_i}$ using its private key. Session keys are for encrypting the \gls{AT} used to access a vehicle by $u_c$ and for generating an authentication tag used by $u_c$ to verify the data authenticity of \gls{BD} contained in the \gls{AT}, respectively.

To generate the \gls{AT}, $[AT^{veh_{u_o}}]$, the key of the vehicle, $[K^{veh_{u_o}}]$, is retrieved from $DB^{\mathsf{S}_i}$ using query and equality check operations as proposed in~\cite{DBLP:conf/esorics/SymeonidisAMMDP17}. In specific, for each $\mathsf{S}_i$, it uses the $ID^{u_o}$ to extract $[K^{veh_{u_o}}]$ from $DB^{\mathsf{S}_i}$. The result is stored in a vector $\vec{D}^{u_o}$ of size $n\times3$, i.e.,
\begin{equation*}
    \vec{D}^{u_o} = 
        \begin{pmatrix}
ID^{u_o} & [ID^{veh_{u_o}}_1] & [K^{veh_{u_o}}_1] \\
\vdots & \vdots & \vdots \\
ID^{u_o} & [ID^{veh_{u_o}}_y] & [K^{veh_{u_o}}_y] \\
\vdots & \vdots & \vdots \\
ID^{u_o} & [ID^{veh_{u_o}}_{n}] & [K^{veh_{u_o}}_{n}]
        \end{pmatrix},
\end{equation*}
where $n$ is the number of vehicles owned by $u_o$ and registered with the \gls{VSS}.

To retrieve the record for the vehicle to be shared, each $\mathsf{S}_i$ uses the $([x] \stackrel{?}{=} [y])$ operation to extract $[ID^{veh_{u_o}}]$ from $[M^{u_c}]$ performing an equality check with each of the $n$ records of $\vec{D}^{u_o}$. The comparison outputs $1$ for identifying the vehicle at position $y$ or $0$ in case of mismatch. The results are stored in a vector $\vec{D}^{veh_{u_o}}$ of length $n$, i.e.,
    \begin{equation*}
        \vec{D}^{veh_{u_o}}=
        \Big(\overset{1}{[0]}\cdots\overset{}{[0]}\overset{y}{[1]}\overset{}{[0]}\cdots\overset{n}{[0]}\Big) \enspace .
    \end{equation*}
Each $\mathsf{S}_i$ then multiplies $\vec{D}^{veh_{u_o}}$ and $\vec{D}^{u_o}$ to construct a vector of length $3$, i.e.,
    \begin{equation*}
        \vec{D}^{veh_{u_o}}\times\vec{D}^{u_o} = \Big(ID^{u_o}\; [ID^{veh_{u_o}}_y]\; [K^{veh_{u_o}}_y]\Big) \enspace .
    \end{equation*}
Based on the resultant vector, $\vec{D}^{veh_{u_o}}\times\vec{D}^{u_o}$, the secret key share of the vehicle $[K^{veh_{u_o}}_y]$ is retrieved. 

To preserve the confidentiality of $[M^{u_c}]$, each $\mathsf{S}_i$ encrypts it with the $[K^{veh_{u_o}}_y]$ using the symmetric key encryption, $\mathsf{E}()$, function. The generated \gls{AT} requires a second layer of encryption making $[AT^{veh_{u_o}}]$ and the $[ID^{veh_{u_o}}]$ available only to $u_c$. Specifically, the \gls{VSSP} servers, $\mathsf{S}_i$, collaboratively encrypt $[M^{u_c}]$ using the retrieved $[K^{veh_{u_o}}]$ to generate an \gls{AT} for the vehicle in shared form, i.e., $[AT^{veh_{u_o}}]$. Then, each $S_i$ collaboratively perform a second layer of encryption, using $[AT^{veh_{u_o}}]$ and $[ID^{veh_{u_o}}]$ with $[K^{u_c}_{enc}]$ to generate and retrieve $C^{u_c}$ using $\mathsf{open}([C^{u_c}])$.


In addition, each $\mathsf{S}_i$ generates an authentication tag, $[AuthTag^{BD^{u_o, u_c}}]$, that can be later used to retrieve the associated $AT^{veh_{u_o}}$ from the \gls{PL} by $u_c$. Using $\mathsf{mac}()$ with $[\vec{K}^{u_c}_{tag}]$ and $[BD^{u_o, u_c}]$ as inputs, each $\mathsf{S}_i$ creates an authentication tag $[AuthTag^{BD^{u_o, u_c}}]$.~\footnote{Recall that $\vec{K}^{u_c}_{tag}=(K^{u_c}_{tag_{mac}},K^{u_c}_{tag_{enc}})$.} Prior to posting on the \gls{PL}, we use $\mathsf{open}([AuthTag^{BD^{u_o, u_c}}])$, reconstructing the shares and obtain $AuthTag^{BD^{u_o, u_c}}$. Note that for the efficient \gls{MPC}, we perform Enc-then-Hash-then-MAC. The reason is that, following~\cite{rotaru2017modes} encryption, i.e., $\mathsf{E}()$, can be done in parallel and separately (thus efficient); the hash does not need to be done in \gls{MPC} and the \gls{MPC} parties, $\mathsf{S}_i$, can apply the hash function locally (see Sec.~\ref{sec:protocol_evaluation}). Essentially, we trade ``parallel \gls{MPC} encryption'' for ``having to evaluate a hash function on large input in \gls{MPC}''.~\footnote{In our implementation, we use CBC-MAC-AES and HtMAC-MiMC as we describe in Sec.~\ref{sec:protocol_evaluation}.}

Finally, each $\mathsf{S}_i$ sends an access-token-publication request, i.e., \textit{AT\_PUB\_REQ}, to \gls{PL} along with $C^{u_c}$ and $AuthTag^{BD^{u_o, u_c}}$. 


\begin{figure*}[hbt!]
    \centering
        \resizebox{0.95\textwidth}{!}{%
        \includegraphics{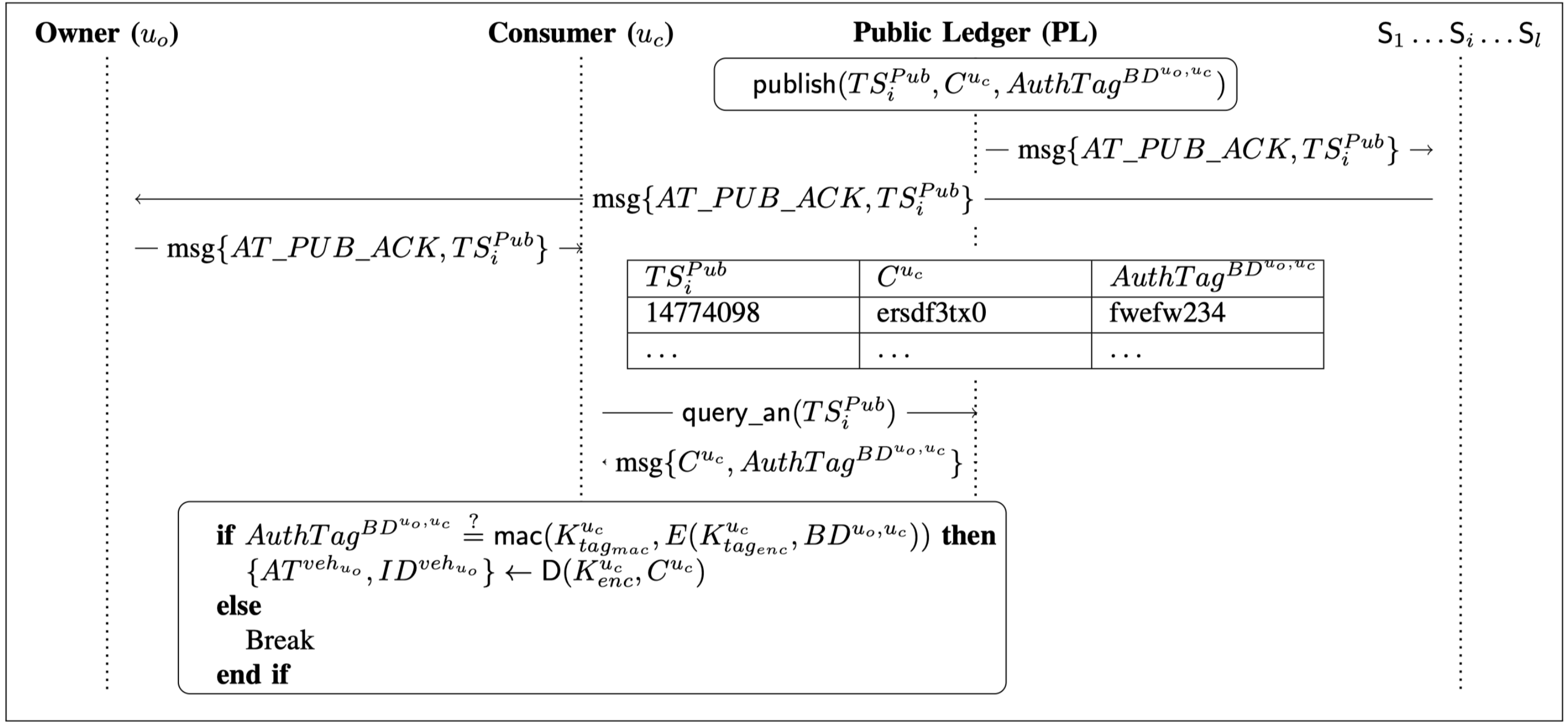}}
        \caption{Step 3: Access token distribution and verification.} 
    \label{fig:step3}
\end{figure*}

\paragraph*{Step 3 -- \acrfull{AT} distribution and verification}\label{step3}
The encrypted \gls{AT} is published at the \gls{PL}. The \gls{AT} then is retrieved by $u_c$ to access the vehicle. 

In detail, as depicted in Fig.~\ref{fig:step3}, after receiving the \textit{AT\_PUB\_REQ}, \gls{PL} publishes $C^{u_c}$, $AuthTag^{BD^{u_o, u_c}}$ and the publication time-stamp, i.e., $TS^{Pub}$.

The consumer, $u_c$, monitors \gls{PL} for concurrent and announced time-stamps, $TS^{Pub}$, to identify the corresponding $C^{u_c}$ using $AuthTag^{BD^{u_o, u_c}}$. Upon identification, $C^{u_c}$ queries and anonymously retrieves $C^{u_c}$ from \gls{PL} using $\mathsf{query\_an()}$, such that \gls{PL} cannot identify $u_c$. Then, $u_c$ decrypts $C^{u_c}$ using $K^{u_c}_{enc}$ to obtain the \gls{AT} and the vehicle identity, $\{AT^{veh_{u_o}}, ID^{veh_{u_o}}\}$.  Note that, in a parallel manner and for synchronization purposes, \gls{PL} forwards an acknowledgment of the publication, \textit{AT\_PUB\_ACK}, along with $TS^{Pub}_i$ to at least one $\mathsf{S}_i$ which then it forwards $TS^{Pub}_i$ to $u_c$ via $u_o$. Upon receipt of \textit{AT\_PUB\_ACK}, $u_c$ uses $TS^{Pub}_i$ to query $\mathsf{PL}$. In the same manner, it uses $\mathsf{query\_an()}$ to anonymously retrieve $C^{u_c}$ and $AuthTag^{BD^{u_o, u_c}}$. 

Then, $u_c$ verifies locally the authentication tag $C^{B}$ using the $\vec{K}^{u_c}_{tag}$ and $BD^{u_o, u_c}$ as inputs to the $\mathsf{mac}()$ function. A successful verification assures $u_c$ the validity of \gls{AT}, that it contains the agreed \gls{BD} during \textit{vehicle booking} prerequisite step. Next, $u_c$ using $K^{u_c}_{tag_{enc}}$ decrypts $C^{u_c}$ to retrieve,  $\{AT^{veh_{u_o}}, ID^{veh_{u_o}}\}$, the access token and the identifier of the vehicle respectively.


\begin{figure*}[hbt!]
    \centering
        \resizebox{0.95\textwidth}{!}{%
        \includegraphics{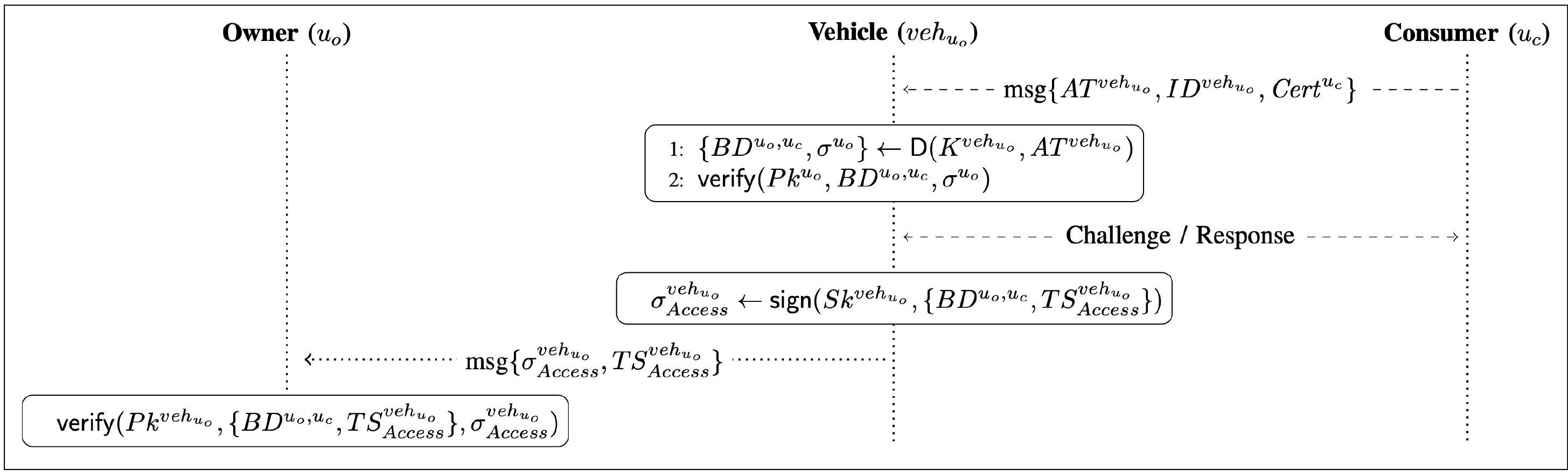}}
        \caption{Step 4: vehicle access. Dashed lines represent close range wireless communication.} 
    \label{fig:step4}
\end{figure*}

\paragraph*{Step 4 -- Car Access}\label{step4}
The consumer uses the $AT^{veh_{u_o}}$, $ID^{veh_{u_o}}$, and $\mathit{Cert}^{u_c}$, to obtain access to the vehicle, using any challenge-response protocol based on public key implementations~\cite{DBLP:conf/codaspy/DmitrienkoP17,DBLP:journals/dcc/DiffieOW92} (see Fig.~\ref{fig:step4}). 

In detail, $u_c$ sends directly to the vehicle $\{AT^{veh_{u_o}}, ID^{veh_{u_o}}, \mathit{Cert}^{u_c}\}$, using a secure and authenticated short-range communication channel such as NFC. It can use any challenge-response protocol for the connection establishment based on public/private key~\cite{DBLP:conf/codaspy/DmitrienkoP17,DBLP:journals/dcc/DiffieOW92}. Upon receipt, the \gls{OBU} of vehicle decrypts $AT^{veh_{u_o}}$ using $K^{veh_{u_o}}$ to obtain $M^{u_c} = \{BD^{u_o, u_c}, \sigma^{u_o}\}$. 

The \gls{OBU} then performs the following verification. First, it checks the signature $\sigma^{u_o}$ to verify that the booking details, $BD^{u_o, u_c}$, were not altered and were indeed approved by the vehicle owner. Then, it verifies the identity of $u_c$, using the received $\mathit{Cert}^{u_c}$ (along with the $\mathsf{hash}(\mathit{Cert}^{u_c})$ in $BD^{u_o, u_c}$). Finally, it verifies that the access attempt satisfies the conditions specified in $BD^{u_o, u_c}$. If successful, the \gls{OBU} grants $u_c$ access to $veh_{u_o}$. It signs $\{ BD^{u_o, u_c}, TS^{veh_{u_o}}_{Access} \}$, where $TS^{veh_{u_o}}_{Access}$ is the time-stamp of the instant at which access was granted to $veh_{u_o}$. Finally, it forwards the msg$\{\sigma^{veh_{u_o}}_{Access},TS^{veh_{u_o}}_{Access}\}$ to $u_o$. Otherwise, if any verification fails, the \gls{OBU} terminates the vehicle access process, denying access to the vehicle.


\section{Functional, Security and Privacy Requirements Analysis}\label{sec:extended_analysis}
\newcommand{\getsR}{\xleftarrow{{\scriptscriptstyle\$}}}
\newcommand{\advsign}[1]{\mathrm{Adv}_{\mathrm{sign}}^{#1\text{-}\mathrm{euf}}}
\newcommand{\advprf}[1]{ \mathrm{Adv}_{\mathrm{prf}}^{#1\text{-}\mathrm{prf}}}
\newcommand{\advpke}[1]{ \mathrm{Adv}_{\mathrm{enc}}^{#1\text{-}\mathrm{pke}}}
\newcommand{\advske}[1]{ \mathrm{Adv}_{\mathrm{E}}^{#1\text{-}\mathrm{ske}}}

\newcommand{\advmac}[1]{ \mathrm{Adv}_{\mathrm{mac}}^{#1\text{-}\mathrm{mac}}}
\newcommand{\advcol}{    \mathrm{Adv}_{\mathsf{hash}}^{\mathrm{col}}}
\newcommand{\badA}{\mathcal{A}}
\newcommand{\badAsign}{\mathcal{A}_{\mathrm{sign}}}
\newcommand{\badAprf}{\mathcal{A}_{\mathrm{prf}}}
\newcommand{\badApke}{\mathcal{A}_{\mathrm{pke}}}
\newcommand{\badAske}{\mathcal{A}_{\mathrm{ske}}}
\newcommand{\badAmac}{\mathcal{A}_{\mathrm{mac}}}
\newcommand{\eve}{\mathsf{E}}

We argue that \system\ fulfills its functional requirements and prove that it is secure and privacy-enhancing satisfying the requirements of Section~\ref{sec:system_model}.

\subsection{Functional Requirements Realization}



\paragraph*{FR1 -- Offline vehicle access} While Steps 1-3 (Fig.~\ref{fig:step1} - Fig.~\ref{fig:step3}) require a network connection, Step 4 provides vehicle access using short-range wireless communication. The vehicle can offline decrypt and verify the \gls{AT} using its key, $K^{veh_{u_o}}$, and the public-key, $Pk^{u_o}$, of $u_o$, both stored locally. The signature of access confirmation, $\sigma^{veh_{u_o}}_{Access}$, can be sent over the Internet to $u_o$, or when $veh_{u_o}$ and $u_o$ are in close proximity.

\paragraph*{FR2 -- \acrfull{AT} update and revocation by the owner $u_o$} \system\ can update or revoke \gls{AT} as described in Steps 1-3, as a new booking request. After an agreement for an update action between $u_o$ and $u_c$, the necessary \gls{BD} values are updated to $\hat{BD}^{u_o, u_c}$. In case of revocation, upon agreement between $u_o$ and $u_c$, the parameters in $\hat{BD}^{u_o, u_c}$ are set to a predefined value specifying the revocation action. There might be occasions in which the \gls{AT} update or revocation needs to be enforced by $u_o$ while preventing $u_c$ from blocking such requests/operations. \system\ can execute requests initiated by $u_o$ alone, without the involvement of $u_c$. More specifically, the generation of session keys is performed by $u_o$, requesting an \gls{AT} from \gls{VSSP}, querying the \gls{PL}, and forwarding the token to vehicle, $veh_{u_o}$. The \gls{PD} of the owner forwards the updated \gls{AT} using a short-range (in close proximity) or an Internet connection (e.g., cellular data) if needed, for restricting access a dishonest consumer (e.g., fleeing with the vehicle).

\subsection{Security and Privacy}

\system\ is secure and privacy-enhancing, provided that its underlying cryptographic primitives are sufficiently secure. Informally, we demonstrate the following:
\begin{theorem}[Informal]
    Assume that communication between all entities at \gls{VSS} takes place over private channels - are secure and authenticated using, e.g., SSL-TLS~\cite{rfc8446}. If
    \begin{itemize}
    \item the \gls{MPC} is statistically secure~\cite{rotaru2017modes},
     \item the key derivation function $\mathsf{kdf}$ is multi-key secure~\cite{DBLP:journals/jacm/GoldreichGM86},
    \item the signature scheme $\mathsf{sign}$ is multi-key existentially unforgeable~\cite{DBLP:journals/siamcomp/GoldwasserMR88},
    \item the public-key encryption scheme $\mathsf{enc}$ is multi-key semantically secure~\cite{DBLP:conf/eurocrypt/BellareBM00},
    \item the symmetric key encryption scheme $\mathsf{E}$ is multi-key chosen-plaintext secure~\cite{DBLP:conf/focs/BellareDJR97},
    \item the \gls{MAC} function $\mathsf{mac}$ is multi-key existentially unforgeable~\cite{DBLP:journals/siamcomp/GoldwasserMR88}, and
    \item the hash function $\mathsf{hash}$ is collision resistant~\cite{DBLP:conf/fse/RogawayS04},
\end{itemize}
then, \system\ fulfills the security and privacy requirements of Sect.~\ref{sec:system_model}.
\end{theorem}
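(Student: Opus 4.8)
The plan is to prove the theorem by a \emph{reduction-per-requirement} strategy: for each security requirement SR1--SR8 and privacy requirement PR1--PR3 of Section~\ref{sec:system_model}, I would formalise the property as a game between a challenger running an honest \system\ instance and an adversary $\badA$ whose capabilities match the adversarial model (honest-but-curious for the \gls{VSSP}, \gls{PL}, \gls{VM}, and owner; active for consumers and outsiders). I would then exhibit, for each property, a sequence of games that transforms the real transcript seen by $\badA$ into one that is information-theoretically independent of the secret being protected, bounding each game hop by the advantage against exactly one of the named primitives. The final statement follows by a union bound, yielding a total advantage of the shape
\begin{equation*}
  \advsign{q} + \advprf{q} + \advpke{q} + \advske{q} + \advmac{q} + \advcol + \epsilon_{\mathrm{MPC}},
\end{equation*}
where $q$ bounds the number of sessions and keys and $\epsilon_{\mathrm{MPC}}$ is the statistical term from the \gls{MPC} simulator.

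The first game hop, common to all requirements, is to replace every \gls{MPC} sub-protocol by its simulator. Because the underlying secret sharing and protocols of~\cite{DBLP:conf/ccs/ArakiFLNO16,rotaru2017modes} are statistically secure against a passive server, a simulator reconstructs each server's view from its own shares together with the opened public outputs ($C^{u_c}$, $AuthTag^{BD^{u_o,u_c}}$, and the timestamps), indistinguishable up to $\epsilon_{\mathrm{MPC}}$; after this hop the servers' transcript carries no information beyond those public outputs, so the remaining analysis reduces to the cleartext primitives applied to them. For the \emph{confidentiality and secrecy} group (SR1, SR3, SR4, SR5) I would argue: $K^{veh_{u_o}}$ is never opened and is therefore protected purely by the simulator (SR4); $BD^{u_o,u_c}$ and $AT^{veh_{u_o}}$ appear only inside $C^{u_c}$, an encryption under the consumer session key, so replacing the plaintext by a random message is justified by multi-key CPA security of $\eve$ and by multi-key semantic security of $\mathsf{enc}$ protecting the transported session-key shares, giving SR1 and SR3; and forward/backward secrecy (SR5) follows by replacing $\mathsf{kdf}$ outputs with independent random values using multi-key \gls{PRF} security. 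The same encrypt-then-hide structure yields the \emph{privacy} requirements: since $ID^{veh_{u_o}}$ and $\mathsf{hash}(\mathit{Cert}^{u_c})$ live only inside $C^{u_c}$ and the \gls{PL} is queried through an anonymous channel, the published records are indistinguishable from fresh encryptions of unrelated messages, delivering unlinkability (PR1), anonymity (PR2), and operation-indistinguishability (PR3), the latter because generation, update, and revocation all produce syntactically identical $(C^{u_c}, AuthTag^{BD^{u_o,u_c}})$ pairs.

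For the \emph{authenticity and non-repudiation} group (SR2, SR6, SR7, SR8) I would instead reduce to unforgeability. Any adversary that makes the vehicle accept a $BD^{u_o,u_c}$ the owner never approved, or that lets the owner repudiate a generated token, must produce a fresh valid $\sigma^{u_o}$ on a new message, which I would turn into a forger $\badAsign$ against multi-key EUF-CMA of $\mathsf{sign}$; the matching of $\mathsf{hash}(\mathit{Cert}^{u_c})$ inside $BD^{u_o,u_c}$ additionally needs collision resistance, handled by a collision finder bounded by $\advcol$. Non-repudiation of receipt (SR7) is symmetric, reducing a consumer's denial to a forgery of the \gls{OBU} access signature $\sigma^{veh_{u_o}}_{Access}$ or of the consumer's challenge--response authentication, while the tag $AuthTag^{BD^{u_o,u_c}}$ is covered by multi-key \gls{MAC} unforgeability bounded by $\advmac{q}$. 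Accountability (SR8) is then a book-keeping argument: the signatures collected above, together with the authority-side reconstruction of the relevant shares, bind each transaction to its owner and consumer without opening any unrelated session.

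The main obstacle I expect is the clean composition of the statistical \gls{MPC} simulator with the computational primitive reductions in the genuinely multi-key, multi-session setting. Concretely, after simulating the \gls{MPC} I must run a hybrid over all $q$ sessions while $\badA$ adaptively schedules honest-but-curious and active parties; the reduction to a single-key challenger (CPA, EUF, or \gls{PRF}) must embed its challenge into one session and simulate the remaining $q-1$ sessions itself, which is exactly why the multi-key formulations of the assumptions are essential rather than cosmetic. A secondary subtlety is that active consumers may submit malformed shares or ciphertexts; the argument must verify that the honest servers' \gls{MPC} and the \gls{OBU}'s verification steps either abort or are absorbed by the unforgeability reductions, so that deviation cannot produce an advantage outside the stated sum.
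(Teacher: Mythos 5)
Your proposal is correct and follows essentially the same route as the paper's proof: a hybrid argument that first replaces the \gls{MPC} by a single trusted party via statistical security, then replaces the \gls{PRF}-based key derivation, the public-key encryption, and the symmetric encryption by ideal counterparts (each hop charged to the corresponding multi-key advantage), and finally handles the authenticity requirements SR2, SR6, SR7 by reducing to signature/\gls{MAC} forgery and hash-collision events, combined by a union bound. The only differences are presentational — the paper performs the hybrid once globally rather than per requirement and states the exact multiplicities (e.g., $2q$ signature queries over $\mu_o+\mu_{\mathrm{veh}_{u_o}}$ keys) — so no further comparison is needed.
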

Details on the \textit{semantic security analysis} of \system\ is given below. More precisely, in Section~\ref{sec:analysis-model} we describe the security models of the cryptographic primitives. Then, the formal reasoning is given in Section~\ref{sec:analysis-proof}.

\subsection{Cryptographic Primitives}\label{sec:analysis-model}
Note that in \system, the cryptographic primitives are evaluated under different keys, and therefore we will need the security of the cryptographic primitives in the \emph{multi-key} setting. For example, $\mathsf{enc}$ is used for different keys, each for a different party in the \gls{VSSP}, $\mathsf{E}$ and $\mathsf{mac}$ are used for independent keys (i.e., session keys) for every fresh evaluation of the protocol; and $\mathsf{sign}$ is used by all owners, $u_o$, each with a different key. Bellare et al.~\cite{DBLP:conf/eurocrypt/BellareBM00} showed how public key encryption can be generalized to \emph{multi-key} security; the adaptation straightforwardly generalizes to the other security models. 

In the definitions below, for a function $f$, we define by $\mathrm{Func}(f)$ as the set of all functions with the exact same interface as $f_K$. We denote a random drawing by $\getsR$.

\begin{definition}\label{def:prf}
    Let $\mu\geq1$. Consider a key derivation function using a pseudorandom function $\mathrm{prf}=(\mathsf{kg},\mathsf{prf})$.
    We define the advantage of an adversary $\badA$ in breaking the $\mu$-multikey pseudorandom function security as
    \begin{align*}
    &\advprf{\mu}(\badA) = \\
    &\qquad\left| \Pr\left( K^1,\ldots,K^{\mu}\getsR\mathsf{kg} \;:\; \badA^{\mathsf{prf}(K^i,\cdot)}=1\right) -\right.\\
    &\qquad\qquad\qquad\left.\Pr\left( \$^1,\ldots,\$^{\mu}\getsR\mathrm{Func}(\mathsf{prf}) \;:\; \mathcal{A}^{\$^i}=1\right) \right|\,.
\end{align*}
    We define by $\advprf{\mu}(q,t)$ the maximum advantage, taken over all adversaries making at most $q$ queries and running in time at most $t$.
\end{definition}

\begin{definition}\label{def:sign}
    Let $\mu\geq1$. Consider a signature scheme $\mathrm{sign}=(\mathsf{kg},\mathsf{sign},\mathsf{verify})$.
    We define the advantage of adversary $\badA$ in breaking the $\mu$-multikey existential unforgeability as
    \begin{align*}
            &\advsign{\mu}(\badA) = \\
            &\qquad\Pr\left((Pk^1,Sk^1),
            \ldots,(Pk^{\mu},Sk^{\mu})\getsR\mathsf{kg} \;:\;\right.\\
            &\qquad\qquad\qquad\qquad\qquad\qquad\left.\badA^{\mathsf{sign}(Sk^i,\cdot)}(Pk^i) 
            \text{ forges} \right)\,,
    \end{align*}
    where ``forges'' means that $\badA$ outputs a tuple $(i,M,\sigma)$ such that $\mathsf{verify}(Pk^i,M,\sigma)=1$ and $M$ has never been queried to the $i$-th signing oracle. 
    We define by $\advsign{\mu}(q,t)$ the maximum advantage, taken over all adversaries making at most $q$ queries and running in time at most $t$.
\end{definition}

\begin{definition}\label{def:enc}
    Let $\mu\geq1$. Consider a public-key encryption scheme $\mathrm{enc}=(\mathsf{kg},\mathsf{enc},\mathsf{dec})$.
    We define the advantage of adversary $\badA$ in breaking the $\mu$-multikey semantic security as
    \begin{align*}
    &\advpke{\mu}(\badA) = \\
    &\left| \Pr\left( (Pk^1,Sk^1),\ldots,(Pk^{\mu},Sk^{\mu})\getsR\mathsf{kg} \;:\; \badA^{\mathcal{O}_0}(Pk^i)=1\right) -\right.
    \\
    &\left.\Pr\left( (Pk^1,Sk^1),\ldots,(Pk^{\mu},Sk^{\mu})\getsR\mathsf{kg} \;:\; \badA^{\mathcal{O}_1}(Pk^i)=1\right) \right|\,,
    \end{align*}
    where $\mathcal{O}_b$ for $b\in\{0,1\}$ gets as input a tuple $(i,m_0,m_1)$ with $i\in\{1,\ldots,\mu\}$ and $|m_0|=|m_1|$ and outputs $\mathsf{enc}_{Pk^i}(BD^{u_o, u_c})$. 
    We define by $\advpke{\mu}(t)$ the maximum advantage, taken over all adversaries running in time at most $t$.
\end{definition}
\begin{definition}\label{def:e}
    Let $\mu\geq1$. Consider a symmetric-key encryption scheme $\mathrm{E}=(\mathsf{kg},\mathsf{E},\mathsf{D})$.
    We define the advantage of adversary $\badA$ in breaking the $\mu$-multikey chosen-plaintext security as
    \begin{align*}
        & \advske{\mu}(\badA) = \\
        &\qquad\left| \Pr\left( K^1,\ldots,K^{\mu}\getsR\mathsf{kg} \;:\; \badA^{\mathsf{E}(K^i,\cdot)}=1\right) - \right.\\
        &\qquad\qquad\qquad\left.\Pr\left( \$^1,\ldots,\$^{\mu}\getsR\mathrm{Func}(\mathsf{E}) \;:\; \mathcal{A}^{\$^i}=1\right) \right|\,.
    \end{align*}
    We define by $\advske{\mu}(q,t)$ the maximum advantage, taken over all adversaries making at most $q$ queries and running in time at most $t$.
\end{definition}
\begin{definition}\label{def:mac}
    Let $\mu\geq1$. Consider a MAC function $\mathrm{mac}=(\mathsf{kg},\mathsf{mac})$.
    We define the advantage of adversary $\badA$ in breaking the $\mu$-multikey existential unforgeability as
    \begin{multline*}
    \advmac{\mu}(\badA) = \\\Pr\left( K^1,\ldots,K^{\mu}\getsR\mathsf{kg} \;:\; \badA^{\mathsf{mac}(K^i,\cdot)} \text{ forges} \right)\,,
    \end{multline*}
    where ``forges'' means that $\badA$ outputs a tuple $(i,M,\sigma)$ such that $\mathsf{mac}(K^i,M)=\sigma$ and $M$ has never been queried to the $i$-th MAC function. 
    We define by $\advmac{\mu}(q,t)$ the maximum advantage, taken over all adversaries making at most $q$ queries and running in time at most $t$.
\end{definition}
Finally, we consider the hash function $\mathsf{hash}$ to be collision-resistant. We denote the supremal probability of any adversary in finding a collision for $\mathsf{hash}$ in $t$ time by $\advcol(t)$. The definition is, acknowledgeably, debatable: for any hash function there exists an adversary that can output a collision in constant time (namely, one that has a collision hardwired in its code). We ignore this technicality for simplicity and refer to \cite{DBLP:conf/fse/RogawayS04,DBLP:journals/dcc/Stinson06,DBLP:conf/vietcrypt/Rogaway06} for further discussion.

\subsection{Analysis}\label{sec:analysis-proof}
We prove that \system\ satisfies the security and privacy requirements of Section~\ref{sec:system_model}, provided that its underlying cryptographic primitives are sufficiently secure.
\begin{theorem}\label{thmextended}
Suppose that communication takes place over private channels, the MPC is statistically secure, $\mathsf{hash}$ is a random oracle, and

\begin{equation*}
    \begin{split}
        \advsign{\mu_o+\mu_{\mathrm{veh_{u_o}}}}(2q,t) + \advprf{\mu_c}(2q,t) + \advpke{l}(t) + \\
        \advske{2q+\mu_\mathrm{veh_{u_o}}}(3q,t) + \advmac{q}(q,t) + \advcol(t) \ll1\,,
    \end{split}
\end{equation*}

where $\mu_o$ denotes the maximum number of $u_o$s, $\mu_c$ the maximum number of $u_c$s, $\mu_\mathrm{veh_{u_o}}$ the maximum number of vehicles, $l$ the number of servers in the \gls{VSSP}, $q$ the total times the system gets evaluated, and $t$ the maximum time of any adversary.

Then, \system\ fulfills the security and privacy requirements of Section~\ref{sec:system_model}.
\end{theorem}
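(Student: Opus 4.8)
The plan is to prove the theorem through a sequence of game hops (a hybrid argument), starting from the real execution of \system\ against an adversary $\badA$ and ending in an idealized game in which $\badA$'s entire view is independent of the protected data---the \gls{BD} $BD^{u_o,u_c}$, the identities $ID^{u_c}$ and $ID^{veh_{u_o}}$, the vehicle key $K^{veh_{u_o}}$, and the session keys. In that final game the confidentiality requirements (SR1, SR3, SR4) and the privacy requirements (PR1--PR3) hold information-theoretically, because nothing on the \gls{PL} or in any honest party's view depends on the secrets; the integrity and non-repudiation requirements (SR2, SR5--SR8) instead reduce to the unforgeability of $\mathsf{sign}$ and $\mathsf{mac}$. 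Each hop is bounded by exactly one of the advantage terms in the hypothesis, so the total distinguishing advantage is at most their sum, which is $\ll 1$ by assumption.

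First I would fix the hops that remove dependence on secrets, in an order respecting the data flow of Steps~1--3. (i) Using statistical security of the \gls{MPC}, replace the internal views of the (honest-but-curious, non-colluding) \gls{VSSP} servers $\mathsf{S}_i$ by simulated transcripts depending only on the values actually opened, namely $C^{u_c}$ and $AuthTag^{BD^{u_o,u_c}}$; this is what lets the remaining hops treat $\mathsf{E}$ and $\mathsf{mac}$ as black boxes evaluated on secret inputs. (ii) Replace every $\mathsf{kdf}$-derived session key by a uniformly random key, at cost $\advprf{\mu_c}(2q,t)$, since each of the $\mu_c$ consumers derives its keys from an independent master key. (iii) Replace the public-key ciphertexts $C^{\mathsf{S}_i}$ transporting the session-key shares by encryptions independent of their plaintext, at cost $\advpke{l}(t)$ over the $l$ server keys. (iv) Replace the two symmetric-encryption layers---the vehicle-key layer forming $[AT^{veh_{u_o}}]$ and the session-key layer forming $C^{u_c}$---by random strings of equal length, at cost $\advske{2q+\mu_{\mathrm{veh_{u_o}}}}(3q,t)$, the count reflecting the two fresh per-evaluation session encryptions (the $C^{u_c}$ layer and the inner HtMAC encryption) plus one long-lived key per vehicle. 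After (iv) everything published on the \gls{PL} is random and independent of $BD^{u_o,u_c}$ and the identities.

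Next I would handle authenticity. The vehicle accepts a booking only after verifying $\sigma^{u_o}$, and the consumer's access is attested by $\sigma^{veh_{u_o}}_{Access}$; bounding the probability that $\badA$ makes a vehicle or owner accept a message it never signed by $\advsign{\mu_o+\mu_{\mathrm{veh_{u_o}}}}(2q,t)$ yields SR2, SR6 and SR7, and underwrites accountability (SR8), since the opened signatures are the transaction evidence handed to authorities. Forward and backward secrecy (SR5) is already secured by hop~(ii), as distinct evaluations use independent random session keys. The integrity of the tag the consumer uses to locate and validate its \gls{AT} reduces to $\advmac{q}(q,t)$, and binding $\mathsf{hash}(\mathit{Cert}^{u_c})$ inside $BD^{u_o,u_c}$ to the correct consumer uses collision resistance, at cost $\advcol(t)$ (with $\mathsf{hash}$ modeled as a random oracle). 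Given hops~(i)--(iv), the privacy requirements follow: two requests produce independent random ledger entries retrieved over the anonymous channel, hence unlinkable (PR1); identities appear only inside the now-hidden $BD^{u_o,u_c}$, giving anonymity (PR2); and generation, update and revocation are syntactically identical message flows whose payloads are all random, giving indistinguishability (PR3).

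The hard part will be the composition in hop~(i): because $\mathsf{E}$ and $\mathsf{mac}$ are evaluated \emph{inside} \gls{MPC} on secret-shared inputs with only the final result opened, I must show that the joint view of the servers can be simulated from the opened output alone, and then invoke the \emph{standalone} multi-key security of the primitive on that output. Getting the simulation and the reduction to interact cleanly---so that the statistical \gls{MPC} error and the computational advantage of the primitive simply add---requires care, as does the multi-key bookkeeping that produces the exact key and query counts in the hypothesis. A further delicate point is the Enc-then-Hash-then-MAC structure: since the hash is applied locally in the clear to the opened ciphertext rather than inside \gls{MPC}, I must argue in the random-oracle model that this local step opens no malleability or collision avenue, which is precisely where $\advcol(t)$ enters.
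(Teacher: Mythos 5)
Your proposal follows essentially the same route as the paper's proof: a hybrid argument that successively idealizes the PRF-based key derivation, the MPC (replaced by a trusted party via statistical security), the public-key encryption, and the symmetric encryption (replaced by random invertible functions), with exactly the same cost terms, followed by a per-requirement analysis in the simplified game that reduces SR2, SR6 and SR7 to signature/MAC forgery and hash-collision events bounded by $\advsign{\mu_o+\mu_{\mathrm{veh_{u_o}}}}(2q,t)$, $\advmac{q}(q,t)$ and $\advcol(t)$. The only differences are cosmetic (your hop ordering puts the MPC idealization first, and you phrase the idealized symmetric encryptions as random strings rather than random permutations), so the proposal is correct and matches the paper's argument.
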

\begin{proof}
Recall from Section~\ref{sec:system_model} that owners, $u_o$, and \gls{VM} are honest-but-curious, whereas consumers, $u_c$, and outsiders may be malicious and actively deviate from the protocol. Vehicles are trusted.

Via a hybrid argument, we replace the key derivation functions utilizing pseudorandom functions $\mathsf{prf}(K^{u_c},\cdot)$ by independent random functions $\$^{u_c}$. This step is performed at the cost of
\begin{align}
\advprf{\mu_c}(2q,t)\,,\label{eqn:cost1}
\end{align}
as in every of the $q$ evaluations of \system\ there are two evaluations of a function $\mathsf{prf}$, and at most $\mu_c$ instances of these functions. As we assume that the \gls{MPC} is statistically secure, we can replace the \gls{VSSP} by a single, trusted \acrfull{SP} (with $l$ interfaces) -- it perfectly evaluates the protocol, and it does not reveal/leak any information. Assuming that the public-key encryption reveals nothing, which can be done at the cost of
\begin{align}
\advpke{l}(t)\,,\label{eqn:cost2}
\end{align}
we can, for simplicity, replace it with a perfectly secure public-key encryption $\rho^{VSSP}$ at the \gls{VSSP} directly. Thus, an encryption does not reveal its origin and content, and only \gls{VSSP} can straightforwardly decrypt, therewith eliminating the fact that \gls{VSSP} has $l$ interfaces and has to perform multiparty computation. Now, as the pseudorandom functions are replaced by random functions, the keys to the symmetric encryption scheme, $\mathsf{E}$, are all independently and uniformly distributed, and as the public-key encryption scheme is secure, these keys never leak. Therefore, we can replace the symmetric encryption functionality by perfectly random invertible functions, $\pi^{veh_{u_o}}$ for the vehicles, unique $\pi^{u_c}_{enc}$ for every new encryption with the $u_c$ session keys, and $\pi^{u_c}_{tag_{enc}}$ for every new encryption in the tag computation with $u_c$ session keys, at the cost of
\begin{align}
\advske{2q+\mu_\mathrm{veh_{u_o}}}(3q,t)\,,\label{eqn:cost3}
\end{align}
as there are $2q+\mu_\mathrm{veh_{u_o}}$ different instances involved and at most $3q$ evaluations are made in total. This means that instead of randomly drawing $K^{u_c}_{enc} \leftarrow \$^{u_c}$, we now randomly draw $\pi^{u_c}_{enc}\getsR \mathrm{Func}(\mathsf{E})$. Likewise, for $K^{u_c}_{tag_{enc}}\leftarrow \$^{u_c}$ we now randomly draw $\pi^{u_c}_{tag_{enc}}\getsR \mathrm{Func}(\mathsf{E})$.

We are left with a simplified version of \system. The \gls{VSSP} is replaced by a single trusted authority. The pseudorandom functions are replaced by independent random drawings - $u_c$ uses $\$^{u_c}$ which generates fresh outputs for every call. The public-key encryptions are replaced with a perfectly secure public-key encryption function $\rho^{VSSP}$. Finally, the symmetric-key encryptions are replaced by perfectly random invertible functions $\pi^{veh_{u_o}}$, $\pi^{u_c}_{enc}$, and $\pi^{u_c}_{tag_{enc}}$. The simplified system is illustrated in Figure~\ref{fig:protsimplified}. Here, the derivation of the vehicle key (or, formally, the random function corresponding to the encryption) from the database is abbreviated to $\pi^{veh_{u_o}} \leftarrow \mathsf{query}(ID^{u_o}, DB^{S_i})$ for conciseness.

\bigskip\noindent We will now treat the security and privacy requirements, and discuss how these are achieved from the cryptographic primitives, separately. We recall that the consumer $u_c$ and owner $u_o$ have agreed upon the \gls{BD} prior to the evaluation of \system, hence they know each other by design.

\paragraph*{SR1 --  Confidentiality of \acrfull{BD}, $BD^{u_o, u_c}$} 

In one evaluation of the protocol, $u_c$, $u_o$, the \emph{trusted} \gls{VSSP}, and the shared vehicle, $veh_{u_o}$ learn the \gls{BD} by default or by design. \gls{BD} can only become public through the values $AuthTag^{BD^{u_o, u_c}}$ and $C^{u_c}$ satisfying
\begin{equation}\label{eqn:AuthT}
    \begin{split}
        AuthTag^{BD^{u_o, u_c}} & = \mathsf{mac}(K^{u_c}_{tag_{mac}},\mathsf{E}(K^{u_c}_{tag_{enc}},BD^{u_o, u_c}) \\
    & = \mathsf{mac}(K^{u_c}_{tag_{mac}},\pi^{u_c}_{tag_{enc}}(BD^{u_o, u_c}))\,,    
    \end{split}
\end{equation}

\begin{equation}\label{eqn:Cuc}
    \begin{split}
        C^{u_c} & = \mathsf{E}(K_{enc}^{u_c},\{\mathsf{E}(K^{veh_{u_o}}_y,\{BD^{u_o, u_c},\sigma^{u_o}\}),ID^{veh_{u_o}}\}) \\
        & = \pi^{u_c}_{enc}(\{\pi^{veh_{u_o}}(\{BD^{u_o, u_c},\sigma^{u_o}\}),ID^{veh_{u_o}}\})\,.   
    \end{split}
\end{equation}

Eqn:~\ref{eqn:AuthT},~\ref{eqn:Cuc} reveal nothing about $BD^{u_o, u_c}$ to a malicious outsider, thanks to the security of $\mathsf{mac}$, $\mathsf{E}$, and the independent uniform drawing of the keys $K^{u_c}_{enc}$ and $\vec{K}^{u_c}_{tag}=(K^{u_c}_{tag_{enc}}, K^{u_c}_{tag_{mac}})$; $\pi^{u_c}_{enc}$ and $\pi^{u_c}_{tag_{enc}}$ randomly generated for every evaluation.

The nested encryption $\mathsf{E}$, i.e., $\pi^{u_c}_{enc}\circ \pi^{veh_{u_o}}$ (Eqn:~\ref{eqn:Cuc}), does not influence the analysis due to the mutual independence of the two functions, i.e. the mutual independence of the keys $K_{enc}^{u_c}$ and $K^{veh_{u_o}}_y$.

\paragraph*{SR2 -- Entity and data authenticity of $BD^{u_o, u_c}$ from the owner $u_o$}
An owner who initiates the \gls{AT} generation and distribution, first signs the \gls{BD} using its private key before sending those to the \gls{VSSP} in shares. Therefore, once the vehicle receives the token and obtains the booking details, it can verify the signature of $u_o$ on $BD^{u_o, u_c}$. In other words, the vehicle can verify the source of $BD^{u_o, u_c}$, $u_o$, and its integrity. Suppose, to the contrary, that a malicious consumer can get access to a vehicle of an $u_o$. This particularly means that it created a tuple $(BD^{u_o, u_c},\sigma^{u_o})$ such that $\mathsf{verify}(Pk^{u_o},BD^{u_o, u_c},\sigma^{u_o})$ holds. If $\sigma^{u_o}$ is new, this means that $u_c$ forges a signature for the secret signing key $\mathit{Sk}^{u_o}$. Denote the event of this happening by
\begin{equation}\label{eqn:event1}
    \begin{split}
        &\eve_1\;:\;\badA \text{ forges } \mathsf{sign}(Sk^{u_o},\cdot) \text{ for some }Sk^{u_o}\,. 
    \end{split}
\end{equation}

On the other hand, if $(BD^{u_o, u_c},\sigma^{u_o})$ is old but the evaluation is fresh, this means a collision $\mathsf{hash}(\mathit{Cert}^{u_c})=\mathsf{hash}(\mathit{Cert}^{u_c\prime})$. Denote the event of this happening by
\begin{align}
&\eve_2\;:\;\badA \text{ finds a collision for } \mathsf{hash}\,. \label{eqn:event2}
\end{align}
We thus obtain that a violation of SR2 implies $\eve_1\vee\eve_2$.


\paragraph*{SR3 -- Confidentiality of $AT^{veh_{u_o}}$} The \gls{AT} is generated by the \gls{VSSP} obliviously - as the \gls{VSSP} is trusted. The \gls{AT} is only revealed to the public in encrypted form, through $C^{u_c}$ of (\ref{eqn:Cuc}). Due to the uniform drawing of $\pi^{u_c}_{enc}$ (and the security of $\rho^{VSSP}$ used to transmit this function), only the legitimate user (i.e., $u_c$) can decrypt and learn the \gls{AT}. It shares it with the vehicle over a secure and private channel.

\paragraph*{SR4 --  Confidentiality of vehicle key, $K^{veh_{u_o}}$}
By virtue of our hybrid argument on the use of the symmetric-key encryption scheme, $\mathsf{E}_{K^{veh_{u_o}}}$ got replaced with $\pi^{veh_{u_o}}$, which itself is a keyless random encryption scheme. As the key is now absent, it cannot leak.

Moreover, only the \acrfull{VM} and the vehicle itself hold copies of the vehicle key. The \gls{VM}, as a trusted \gls{SP}, holds all the secret keys of vehicles. As vehicle owners register their vehicles, the \gls{VM} forwards the list of $ID^{veh_{u_o}}$ to \gls{VSSP}. Each \gls{VSSP} server receives  $K^{veh_{u_o}}$ in secret shared form; is indistinguishable from randomness. Hence, these servers learn nothing about the vehicle secret key by virtue of the statistical security of the \gls{MPC}. 

In a nutshell, to retrieve the $y$th key from $DB^{\mathsf{S}_i}$, i.e., $[K^{veh_{u_o}}_y]$, each $\mathsf{S}_i$ performs an equality check over \gls{MPC}. The comparison outcomes 0 for mismatch and 1 for identifying the vehicle at position $y$, i.e.,
 \begin{equation*}
        \vec{D}^{veh_{u_o}}=
        \Big(\overset{1}{[0]}\cdots\overset{}{[0]}\overset{y}{[1]}\overset{}{[0]}\cdots\overset{n}{[0]}\Big) \enspace .
    \end{equation*}
from which the share of the vehicle's secret key, $[K^{veh_{u_o}}]$, can be retrieved. Due to the properties of threshold secret sharing, the secret vehicle keys stay secret to each $\mathsf{S}_i$. Thus, among all \gls{VSS} entities, only the \gls{VM} and the vehicle hold the vehicle key.

\paragraph*{SR5 -- Backward and forward secrecy of $AT^{veh_{u_o}}$}
The \gls{AT} is published on the \acrfull{PL} as $C^{u_c}$ of (\ref{eqn:Cuc}), encrypted using $\pi^{u_c}_{enc}$ (i.e., symmetric key $K_{enc}^{u_c}$). Every honest $u_c$ generates a uniformly randomly drawn function $\pi^{u_c}_{enc}$ (a fresh key $K_{enc}^{u_c}$) for every new evaluation. It uses a key derivation function $\mathsf{kdf}$ utilizing a \gls{PRF} for each key generation and every new evaluation of the protocol, and that is secure. This implies that all session keys are drawn independently and uniformly at random. In addition, the symmetric encryption scheme $\mathsf{E}$ is multi-key secure. Thus, all encryptions $C^{u_c}$ are independent and reveal nothing of each other. Note that nothing can be said about \glspl{AT} for malicious users who may deviate from the protocol and reuse one-time keys.

\paragraph*{SR6 -- Non-repudiation of origin of $AT^{veh_{u_o}}$}
The vehicle, who is a trusted entity, verifies the origin through verification of the signature, i.e., $\mathsf{verify}(Pk^{u_o},BD^{u_o, u_c},\sigma^{u_o})$. The consumer $u_c$ verifies the origin through the verification of the \gls{MAC} function, i.e.,
\begin{equation*}
    AuthTag^{BD^{u_o, u_c}} \stackrel{?}{=} \mathsf{mac}(K^{u_c}_{tag_{mac}},\pi^{u_c}_{tag_{enc}}(BD^{u_o, u_c}))\,.
\end{equation*}

Note that $u_c$ does not effectively verify $AT^{veh_{u_o}}$ but rather $AuthTag^{BD^{u_o, u_c}}$, which suffices under the assumption that the \gls{MPC} servers evaluate their protocol correctly. In either case, security fails only if the asymmetric signature scheme or the \gls{MAC} function are forgeable. The former is already captured by event $\eve_1$ in (\ref{eqn:event1}). For the latter, denote the event this happens by
\begin{align}
&\eve_3\;:\;\badA \text{ forges }
\mathsf{mac}(K^{u_c}_{tag_{mac}},\cdot) \text{ for some }K^{u_c}_{tag_{mac}}\,. \label{eqn:event3}
\end{align}
We thus obtain that a violation of SR6 implies $\eve_1\vee\eve_3$.

\paragraph*{SR7 -- Non-repudiation of $AT^{veh_{u_o}}$ receipt by $veh_{u_o}$ at $u_o$}
The owner $u_o$ can verify the correct delivery of $AT^{veh_{u_o}}$  with the successful verification and message sent by the vehicle to the owner, $\mathsf{verify}(Pk^{veh_{u_o}},\{BD^{u_o, u_c},TS^{veh_{u_o}}_{Access}\}, \sigma^{veh_{u_o}}_{Access})$ at the end of the protocol. Security breaks only if the signature scheme is forgeable. Denote the event of this happening by
\begin{align}
&\eve_4\;:\;\badA \text{ forges } \mathsf{sign}(Sk^{veh_{u_o}},\cdot) \text{ for some }Sk^{veh_{u_o}}\,. \label{eqn:event4}
\end{align}
We thus obtain that a violation of SR7 implies $\eve_4$.

\paragraph*{SR8 -- Accountability of users (i.e., owner $u_o$ and consumer $u_c$)} 
In case of wrongdoing or a dispute, a specific transaction may need to be retrieved and its information reconstructed (and only this information). Reconstruction of information is possible under the condition that \gls{VSSP} servers collude to reveal the shares of a transaction. However, these servers in our setting have competing interests. They would not collaborate and collude to reveal the shares of a transaction unless on wrongdoings were there is a request from legitimate entities such as law authorities'. In our scenario, the private inputs, i.e., information of transactions, can be reconstructed by a majority coalition due to threshold secret sharing properties~\cite{DBLP:conf/ccs/ArakiFLNO16,DBLP:journals/cacm/Shamir79}. This is, if the \gls{VSSP} consists of three servers, it suffices two of the server-shares required to reconstruct the secret.

\paragraph*{PR1 -- Unlinkability of (any two) requests of any consumer, $u_c$, and the vehicle, $veh_{u_o}$(s)}
Consumer and vehicle-identifiable data are included only in $BD^{u_o, u_c}$. It contains the certificate of $u_c$, $\mathit{Cert}^{u_c}$, and the identities of $u_c$, $ID^{u_c}$ and vehicle, $ID^{veh_{u_o}}$. Recall that $BD^{u_o, u_c}$ data are agreed between $u_c$ and $u_o$ before \system\ commences, so $u_o$ learns the identity of $u_c$ by default (prerequisite \textit{Step~B: vehicle booking} in Sec.~\ref{sec:system}). Beyond that, $u_c$ communicates only with the vehicle, $veh_{u_o}$, to forward the $AT^{veh_{u_o}}$ and perform access control. The consumer, $u_c$, queries the $AT^{veh_{u_o}}$ using an anonymous communication channel such ag Tor~\cite{torproject}. The \gls{BD} data are exchanged with the \gls{VSSP} encrypted and do not leak information by virtue of their confidentiality (security requirement SR1).

\paragraph*{PR2 -- Anonymity of any consumer, $u_c$, and vehicle, $veh_{u_o}$} The reasoning is identical to that of PR1.

\paragraph*{PR3 -- Indistinguishability of $AT^{veh_{u_o}}$ operations} \system\ utilizes the same steps and type of messages to \gls{VSSP} and \gls{PL} for access token generation, update, or revocation operation. Hence, system entities and outsiders can not distinguish which type of operation has been requested.


\paragraph{Conclusion} \system\ operates securely as long as the costs of (\ref{eqn:cost1}-\ref{eqn:cost3}), together with the probability that one of the events (\ref{eqn:event1}-\ref{eqn:event4}) occurs, are sufficiently small:
\begin{align*}
& \advprf{\mu_c}(2q,t) + \advpke{l}(t) \:+\\
&\qquad\qquad\advske{2q+\mu_\mathrm{veh_{u_o}}}(3q,t) \:+\\
&\qquad\qquad\qquad\qquad\Pr\left(\eve_{1}\vee\eve_{2}\vee\eve_{3}\vee\eve{4}\right) \ll 1\,.
\end{align*}
By design, the probability that the event $\eve_1\vee\eve_4$ occurs is upper bounded by $\advsign{\mu_o+\mu_{\mathrm{veh_{u_o}}}}(2q,t)$; the probability that event $\eve_3$ occurs is upper bounded by $\advmac{q}(q,t)$, and the probability that $\eve_2$ occurs is upper bounded by $\advcol(t)$. We thus obtain:
\begin{align*}
    &\Pr\left(\eve_{1}\vee\eve_{2}\vee\eve_{3}\vee\eve{4}\right)\\ 
    \leq\:& \advsign{\mu_o+\mu_{\mathrm{veh_{u_o}}}}(2q,t) + \advmac{q}(q,t) + \advcol(t)\,,
\end{align*}
which completes the proof.
\end{proof}

\section{Performance Evaluation and Analysis}\label{sec:protocol_evaluation}

\begin{table*}[ht!]
  \caption {\system\ performance: efficiency and scalability improvements, i.e., \acrfull{AT} generation, per number of vehicles utilizing: CBC-MAC-AES and HtMAC-MiMC. Throughput is evaluated for all servers and communication cost per server.}
  \label{tab:dabit}
\newcolumntype{C}{>{\centering\arraybackslash}X}
\begin{center}
  \begin{tabularx}{\textwidth}{lCCCCC}
    \toprule \toprule
   Type of Vehicle Owners & Protocol & Number of Vehicles per Owner & Communication Rounds & Communication Data (kB) & Throughput (ops/s) \\
	\midrule
     & CBC-MAC-AES & 1   & 568 & 64 & 33 \\ 
    & HtMAC-MiMC        & 1 & 167 & 108 & 546 \\
    \cmidrule{2-6}
    \multirow{3}{*}{Individuals} & CBC-MAC-AES & 2 &  568 & 64 & 32 \\ 
    & HtMAC-MiMC        & 2  & 167 & 108 & 546 \\
    \cmidrule{2-6}
    & CBC-MAC-AES & 4 &  568 &  107.7 & 32 \\ 
    & HtMAC-MiMC        & 4  & 167 & 117 & 544 \\
	\midrule
     & CBC-MAC-AES & 256 & 568 &  76 & 32 \\ 
    & HtMAC-MiMC  & 256  & 167 & 150 & 260 \\
    \cmidrule{2-6}
    \multirow{3}{*}{Vehicle-rental company branches} & CBC-MAC-AES & 512 & 568 &  88 & 32 \\ 
    & HtMAC-MiMC  & 512 & 167 & 194 & 151 \\

    \cmidrule{2-6}
    & CBC-MAC-AES & 1024 & 568 & 112 & 32 \\ 
    & HtMAC-MiMC  & 1024 & 167 & 280 & 84 \\
 
    \bottomrule \bottomrule
\end{tabularx}
\end{center}
\end{table*}

We argue that \system\ fulfills its performance requirement for efficiency and scalability in supporting a large volume of vehicles per user as in real-world deployment (see. Sect.~\ref{sec:system_model}).

 \begin{figure}[t]
        \centering
            \resizebox{\columnwidth}{!}{%
            \includegraphics{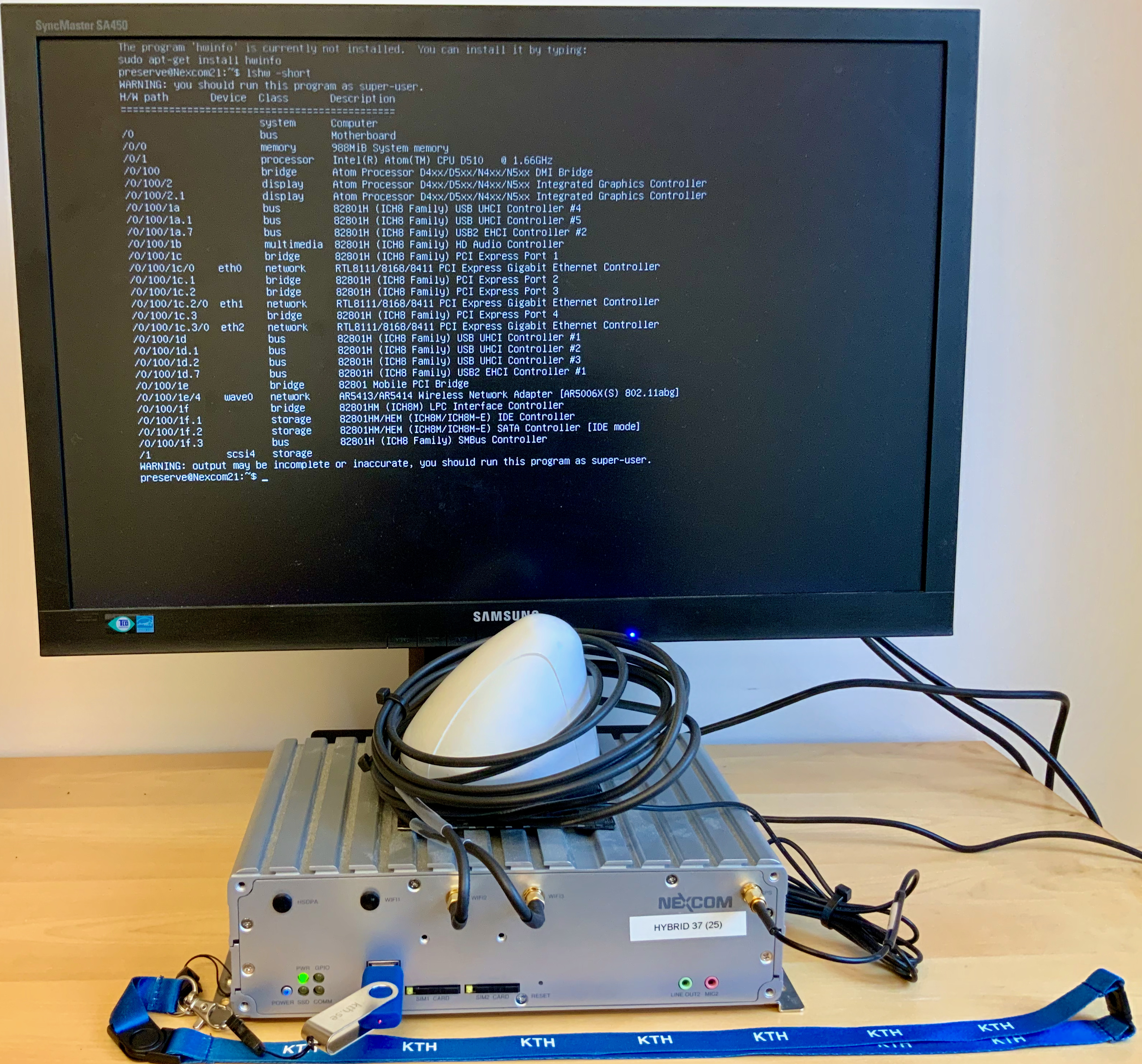}}
            \caption{Nexcom vehicular \acrfull{OBU} box~\cite{nexcom:vtc6201-ft}.} 
        \label{nexcom}
\end{figure}
    
\subsection{Benchmark and environment settings}
In \system\ we take a different approach to \exprotocol~\cite{DBLP:conf/esorics/SymeonidisAMMDP17} as we implement our protocols
in a fully-fledged open-sourced \gls{MPC} framework, i.e., MP-SPDZ \cite{DBLP:conf/ccs/Keller20} (in Step~2 -- see Fig.~\ref{fig:step4}).
The framework supports more than $30$ \gls{MPC} protocols
carefully implemented in \CC. In addition, a Python front-end
compiler allows the expression of circuits in a relatively simple way.
For MP-SPDZ, the compiler reduces the high-level program description
to bytecode or set of instructions for which the parties then 
run an optimized virtual machine written in \CC\ to execute the protocols.
In our case, two versions of \system\ were benchmarked: one with 
CBC-MAC tailored for binary circuits, while the other one uses HtMAC,
which is tailored for arithmetic circuits. We deployed the cryptographic operations in Step~1, Step~3 and Step~4 with OpenSSl~\cite{openssl} and python script for the secret sharing implementation~\cite{DBLP:conf/ccs/ArakiFLNO16}.

For our benchmark we use the following settings: $\length{M^{u_c}} = \length{AT^{veh_{u_o}}} = 10 \cdot
128$-bits, whereas $ID^{veh_{u_o}} \leq 2^{32}$, which thus fits into one $128$
bit-string, and $\length{BD^{u_o, u_c}} = 6 \cdot 128$-bits (including padding). Specifically, we consider $BD^{u_o, u_c}$ with the following message configuration-size: the vehicle identifier $ID^{veh_{u_o}}$ of $32$-bits, the location of the vehicle $L^{veh_{u_o}}$ of $64$-bits, the hashed certificate value of $u_c$ $\mathsf{hash}(Cert^{u_c})$ of $512$-bits, the \gls{BD} identifier $ID^{BD}$ of $32$-bits, the conditions and access rights accessing a vehicle by $u_c$, $CD^{u_c}$ of $96$-bits, and $AC^{u_c}$ of $8$-bits respectively. The $BD^{u_o, u_c}$ signature, $\sigma^{u_o}$, that $u_o$ will provide is of $2048$-bits using RSA-PKCS $\#1$ v2.0~\cite{rfc_2437}.

\paragraph{Environment Settings} We benchmarked our protocols using three distinct computers connected on a LAN network equipped with Intel i$7$-$7700$ CPU
with $3,60$~GHz and $32$~GB of RAM. For intra-\gls{VSSP} communication, we consider $10$~Gb/s network switch and $0.5$~ms \gls{RTT}.~\footnote{The implementation can be obtained from: https://github.com/rdragos/MP-SPDZ/tree/hermes} The vehicular \gls{OBU} \textit{Nexcom VTC 6201-FT} box (see Fig.~\ref{nexcom}) is used to benchmark Step~4. It is equipped with an Intel Atom-D$510$ CPU with $1,66$~GHz and $1$~GB of RAM~\cite{nexcom:vtc6201-ft} from the PRESERVE project~\cite{preserve}.~\footnote{The prototype \gls{OBU} is a hardware module without any additional cryptographic hardware accelerator.}


\subsection{Theoretical Complexity} Measuring the complexity of an \gls{MPC}
protocol usually boils down to counting the number of non-linear operations
in the circuit and the circuit depth. We consider the case where
a protocol is split into two phases. An input-independent (preprocessing)
 phase, where the goal is to produce correlated
randomness. Additionally, an input-dependent (online) phase where parties in \gls{VSSP} provide their
inputs and start exchanging data using the correlated randomness produced
beforehand. One secret multiplication (or an AND gate for the
$\mathbb{F}_2$ case) requires one random Beaver triple (correlated
randomness)~\cite{DBLP:conf/crypto/Beaver91a} from the 
preprocessing phase and two $\mathsf{open()}$ operations in the online phase.

Note that, in our case, the two versions of \system\ are benchmarked using the following two executables: \textit{replicated-bin-party.x} ($\mathbb{F}_2$ case, CBC-MAC)
and \textit{replicated-field-party.x} ($\mathbb{F}_p$ case, HtMAC).
The first executable is the implementation of Araki et al.'s binary-based
protocol~\cite{DBLP:conf/ccs/ArakiFLNO16}, while the latter is for the field case.
Next, we analyze the complexity of these two separately and motivate the two choices.


\paragraph{\textit{CBC-MAC-AES -- Case for Binary Circuits}}
This solution is implemented to have a baseline comparison with \exprotocol~\cite{DBLP:conf/esorics/SymeonidisAMMDP17} using
MP-SPDZ~\cite{DBLP:conf/ccs/Keller20}. The equality check is implemented using a binary tree of AND operations
with a $\log{n}$ depth where $n$ is the number of vehicles (see Fig.~\ref{fig:step2}).
Obtaining the corresponding vehicle key $[K^{veh_{u_o}}]$ assuming there are $n$ vehicles per user
has a cost of $159 \cdot n$ Beaver triples assuming $32$ bit length vehicle IDs. 

When evaluating the operations depicted in~Fig.~\ref{fig:step2} in \gls{MPC},
the most expensive part is computing $[AT^{veh_{u_o}}]$ since that requires
encrypting $10 \cdot 128$ bits, calling AES $10$ times, which has a cost of
$6400 \cdot 10$ AND gates. In the next step, (in line~$8$, Fig.~\ref{fig:step2}), AES is called $11$ times, while the operation computing
CBC-MAC-AES (in line~$10$, Fig.~\ref{fig:step2}) takes only $6$ AES calls. Given
the above breakdown, the theoretical cost for generating an \acrfull{AT} has
a cost of $159 \cdot n + 6400 \cdot 28$ AND gates.


\paragraph{\textit{HtMAC-MiMC -- Case for Arithmetic Circuits}}
Recent results of Rotaru et al.~\cite{rotaru2017modes} showed that,
when considering \gls{MPC} over arithmetic circuits, efficient modes of
operation over encrypted data are possible if the underlying \gls{PRF} is MPC-friendly. We
integrate their approach~\cite{DBLP:conf/esorics/SymeonidisAMMDP17} into \system, and results from Table~\ref{tab:dabit}
show that it is at least $16$ times faster than using \gls{MPC} over binary
circuits with CBC-MAC-AES. This might come as a surprise because comparisons
are more expensive to do in arithmetic circuits. Recent improvements using edaBits~\cite{DBLP:conf/crypto/0001GKRS20} made comparisons much faster, which, in turn,
improved the \gls{MPC} protocols used by \system. To summarize, we breakdown the
cost into the following:
\begin{itemize}
    \item $10$ calls to MiMC to encrypt $M^{u_c}$ (excluding one call for
    computing the tweak according to \cite{rotaru2017modes}),
    \item $11$ more
    calls to compute $C^{u_c}$ - encrypting the concatenation
    of $AT^{veh_{u_o}}$ and $ID^{veh_{u_o}}$. Note that since we are using a
    different key than the first step we need to compute another tweak (one
    extra \gls{PRF} call),
    \item $6$ calls to compute $AuthTag^{BD^{u_o, u_c}}$, one more \gls{PRF} call for
    computing the tweak $N = E_{\vec{K}^{u_c}_{tag}[0]}(1)$ and a final \gls{PRF} call
    $E_{\vec{K}^{u_c}_{tag}[1]}(\mathsf{hash}'(ct))$ where $ct$ are the opened
    ciphertexts from encrypting $BD^{u_o, u_c}$ and $\mathsf{hash}'(\cdot))$ is a truncated
    version of a SHA-3 where we keep the first $128$ bits $\mathsf{hash}$~\cite{rotaru2017modes}.
\end{itemize}

If we include the \gls{PRF} calls to compute the tweaks, there are $31$ calls to a
\gls{PRF}, so one can think that the Boolean case is more efficient than the
arithmetic case. In practice, we see that HtMAC construction is faster than
CBC-MAC-AES, albeit with a factor of two communication overhead (see Table~\ref{tab:dabit}). One reason for this is that HtMAC is
fully parallelizable, resulting in an \gls{MPC} protocol with fewer rounds
than CBC-MAC-AES.

One of the main benefits of HtMAC construction is that it can be
instantiated with the Legendre \gls{PRF}, which can make the number of communication
even lower. We chose the MiMC based \gls{PRF} as that is demonstrated to be
faster on a LAN~\cite{rotaru2017modes} and to have a lower communication overhead -- although a higher number of communication rounds
than the Legendre-based \gls{PRF}~\cite{DBLP:conf/crypto/Damgard88}.

\subsection{Benchmark Results - Efficiency and Total Time}
Although our protocol's construction is agnostic to the underlying \gls{MPC}, its efficiency depends on the chosen \gls{MPC} scheme. We evaluate \system\ utilizing the semi-honest 3-party protocol by Araki et al.~\cite{DBLP:conf/ccs/ArakiFLNO16}. We evaluate its efficiency in both Boolean and arithmetic circuits with CBC-MAC-AES and HtMAC-MiMC, respectively. Note that we report timings for cryptographic operations and secure multiparty evaluations, leaving \gls{DB} accessing and communication latency on client-server and client-vehicle timings outside of our evaluations, as these are not dependent on the actual system construction.

\paragraph*{\textit{Step~1}}
Recalling Step~1 (see Fig.~\ref{fig:step1}), the operations of the \textit{session key generation and \gls{BD} sharing} is taking place on both users, the owner and consumer. At the consumer, $u_c$, the session key generation, using the $\mathsf{kdf}$ function, is implemented with  AES in CRT mode ($\approx \kdftime$~ms). The session keys are encrypted, using the $\mathsf{enc}$ function, with RSA-KEM specifications~\cite{rfc_5990} and $2048$-bit key-size ($\approx \Pubdectime$~ms). At the owner, $u_o$, the signature of the \gls{BD} is generated, using the $\mathsf{sign}$ function, with RSA-PKCS $\#1$ v2.0 specification~\cite{rfc_2437} and $2048$-bit output ($\approx \signtime$~ms). For the creation of the secret shares of $\mathsf{share}$ function, we implemented by the sharing primitive of Araki et al.~\cite{DBLP:conf/ccs/ArakiFLNO16} ($\approx \shareshamirtime$~ms). That results in a total estimation of $\approx \stepAtime$~ms in Step~1.

\paragraph*{\textit{Step~2}}
In Step~2, the \gls{AT} generation takes place at \gls{VSSP} (see Fig.~\ref{fig:step2}). We report the full range of experiments for a varying number of vehicles, $veh_{u_o}$ per owner, $u_o$ as it is illustrated in Fig.~\ref{fig:sims-er}: Fig.~\ref{fig:hermes-comm} regarding the intra-\gls{VSSP} communication cost, and Fig.~\ref{fig:hermes-tru} the throughput between servers.

Specifically, we vary the number of vehicle IDs (i.e., the numbers of vehicles registered per owner) and compute the communication rounds and data sent between
the \gls{VSSP} servers. We also compute the total throughput meaning the total number of \gls{AT}s generated per second (see Fig.~\ref{fig:step2}). In Table~\ref{tab:dabit},
we report the performance for a low number of vehicle IDs (i.e., $1,2,4$), representing individuals, but also for a large number of vehicles
(i.e., $256,512,1024$), representing (large branches of) vehicle-rental companies.

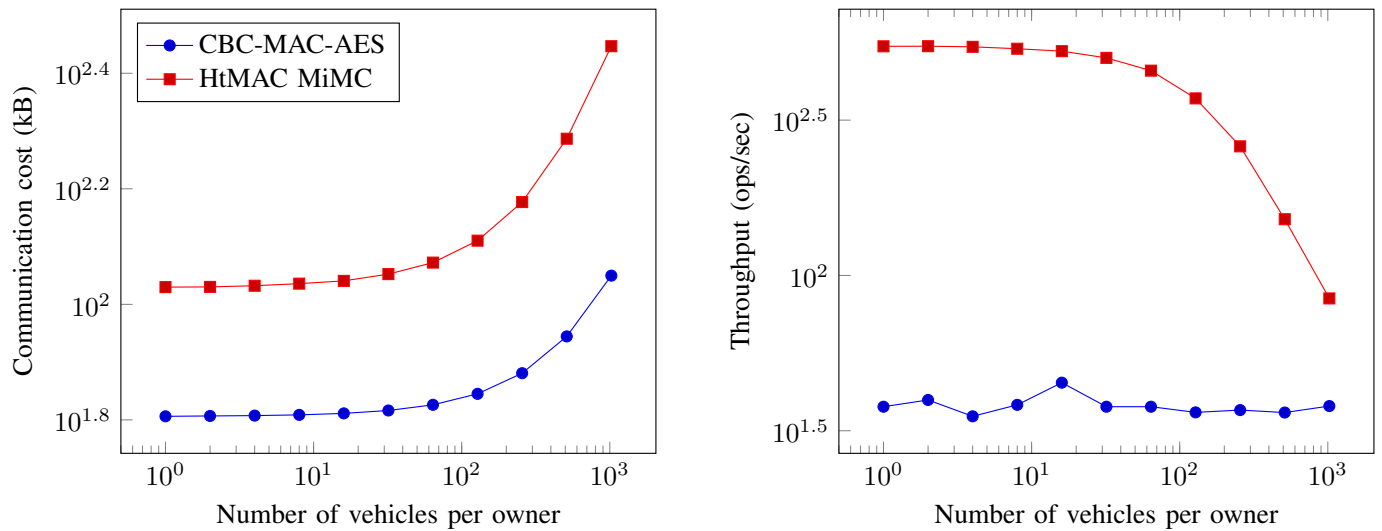
\begin{figure*}
    \centering
    \resizebox{\textwidth}{!}{%
        \begin{subfigure}[b]{0.5\textwidth}
    \begin{tikzpicture}
        \begin{axis}[
          tick pos = left,
          legend style   = {cells={anchor=west}},
          legend pos = north west,
          legend entries = {
            {CBC-MAC-AES},
            {HtMAC MiMC},
          },
          ylabel         = {Communication cost (kB)},
          xlabel         = {Number of vehicles per owner},
          ymode=log,
          xmode=log,
          log basis y={10},
          ]
          \addplot table[x=cars, y=comms] {experiments/cbc-comm};
          \addplot table[x=cars, y=comms] {experiments/htmac-comm};
        \end{axis}
    \end{tikzpicture}
    \caption{Communication cost per server: The intra-\gls{VSSP} communication cost for \acrfullpl{AT} generation (i.e., data sent-received).}
    \label{fig:hermes-comm}
    \end{subfigure}
        \begin{subfigure}[b]{0.45\textwidth}
   \begin{tikzpicture}
        \pgfplotsset{every axis legend/.append style={at={(10.5,10.1)},anchor=west}}
        \begin{axis}[
            ytick pos = left,
          ylabel         = {Throughput (ops/sec)},
          xlabel         = {Number of vehicles per owner},
          ymode=log,
          xmode=log,
          log basis y={10},
          ]
          \addplot table[x=cars, y=tru] {experiments/cbc-comm};
          \addplot table[x=cars, y=tru] {experiments/htmac-comm};
        \end{axis}
      \end{tikzpicture}
    \caption{Throughput for all servers: The throughput of all servers in \gls{VSSP} for \acrfullpl{AT} generation (i.e., ops) per second.}
    \label{fig:hermes-tru}
    \end{subfigure}
    }
    \caption{Communication cost and throughput at Step~2 (see Fig.~\ref{fig:step2}) for a variant number of vehicles per owner - form private individuals with a few vehicles, to rental companies with hundred or thousand of vehicles per branch.}
    \label{fig:sims-er}
\end{figure*}

We can see that the throughput of the \gls{AT} generation when
instantiated using CBC-MAC-AES remains constant, whereas, for HtMAC-MiMC, it is
decreasing. The reason for this is that when scaling up the number of vehicles, the
number of comparisons is increasing as well. For arithmetic
circuits, the comparisons become costly operations, whereas, for Boolean
circuits, comparisons can be made efficiently. However, the throughput for
HtMAC-MiMC is always better than CBC-MAC-AES, and this is because MiMC-based
\gls{PRF} is more lightweight -- requiring fewer multiplications -- and has a smaller
circuit depth.

\paragraph*{\textit{Step~3}}
The consumer in Step~3, queries, retrieves, verifies and decrypts the given \gls{AT} (see Fig.~\ref{fig:step3}). The verification of the \gls{AT} is implemented using the $\mathsf{mac}$ function ($\approx \MACvrftime$~ms). The total cost is $\approx \stepCtime$~ms in Step~3.

\paragraph*{\textit{Step~4}} 
The consumer delivers the \gls{AT} to \gls{OBU} of vehicle, which decrypts and verifies the signature in Step~4 (see Fig.~\ref{fig:step4}). Cryptographic operations are benchmarked at Nexcom \gls{OBU} box~\cite{nexcom:vtc6201-ft,preserve}. The decryption of \gls{AT} with the vehicle key, using the $\mathsf{D}$ function, is implemented with AES in CTR mode ($\approx \SymdecNexcomtime$~ms). The verification of signature of the \gls{BD}, using $\mathsf{verify}$ function, is implemented with RSA $2048$ ($\approx \signvrfNexcomtime$~ms). Finally, the signature is generated, using the $\mathsf{sign}$ function, with $2048$-bit output ($\approx \signNexcomtime$~ms). Note that the challenge-response protocol between the consumer and the vehicle does not directly affect the performance of \system, and thus we omit from our implementation and measurements. The total cost is $\approx \stepDtime$~ms in Step~4.

\paragraph*{\textit{Total}} 
The total cost of our cryptographic operations and \gls{MPC} evaluations considering the arithmetic circuits case (i.e., HtMAC-MiMC) is: {$\approx \totalMiMCsingleVehTime$}~ms for a single-vehicle owner, and {$\approx \totalMiMCmultiVehTime$}~ms for thousand vehicles per owner. It handles $546$ and $84$ access token generations per second, respectively. In addition, client-side \glspl{PD}, owner and consumer, and vehicle \glspl{OBU} need to perform only a few symmetric encryptions, signature and verification operations, making \system\ practical.

\subsection{Comparison with \exprotocol~\cite{DBLP:conf/esorics/SymeonidisAMMDP17}}

We report the main difference on efficiency and scalability between \system\ and \exprotocol~\cite{DBLP:conf/esorics/SymeonidisAMMDP17} is on Step~2 (see Table 2 in~\cite{DBLP:conf/esorics/SymeonidisAMMDP17}) -- the intra-\gls{VSSP} communication cost and throughput. \exprotocol\ reports $\approx 1.2$ seconds for generating the access token. When benchmarked on similar hardware we get a throughput of $33$ access tokens per second.~\footnote{\exprotocol\ specifications: Intel $i7$, $2.6$~Ghz CPU and $8$GB of RAM.} This makes \system\ with the CBC-MAC-AES construction roughly $\stepBCBCmoreAT$ times faster than \exprotocol. Switching from CBC-MAC-AES to HtMAC
offers a throughput of $546$ \glspl{AT} per second, which makes it $\approx
16.5$ times better than CBC-MAC, making it around $\stepBMiMCmoreAT$ times faster than
original timings in \exprotocol~\cite{DBLP:conf/esorics/SymeonidisAMMDP17}.
Thus, these results, specifically for Step~2 (see Fig.~\ref{fig:step2}), demonstrate the benefits of integrating our solution in a fully-fledged \gls{MPC} framework such as MP-SPDZ~\cite{DBLP:conf/ccs/Keller20}.
We stress that our implementation of \exprotocol\ was faster due to writing
CBC-MAC-AES using a mature \gls{MPC} framework such as MP-SPDZ rather than using
custom code as in \cite{DBLP:conf/esorics/SymeonidisAMMDP17}.

\subsection{Satisfying ESR1  --  Efficiency and scalability in a real-world deployment}
We demonstrate that \system\ maintains its efficiency, and it is scalable, supporting owners that could span from a few up to a thousand vehicles for (branches of) car-rental companies.

To argue about the real-world deployment aspect, we need first to find the answer to: ``\textit{how many vehicles per branch exist in a real-world deployment?}'' There is on average a few hundred (i.e., average $\approx~230$ / median $\approx~122$) of vehicles per branch in the U.S. in 2018~\cite{statistics:num-of-cars-per-carsharing} -- drawing from the analogy in \gls{VSS} of car-rental scenarios. It ranged from tens of vehicles (i.e., $\approx~29$) to an upper bound of almost a thousand (i.e., $\approx~900$) of vehicles per branch. Thus, is a safe approximation for \system\ supporting $1024$ vehicles per single owner (e.g., per branch), as in car-rental scenarios.

A follow up question is: ``\textit{how many daily vehicle-sharing operations are performed in \gls{VSS}?}'' This corresponds to the number of \gls{AT} generations in Step~2 (see Fig.~\ref{fig:step2}). According to reports~\cite{statistics:num-of-transactions-europe,statistics:num-of-transactions-avis-worldwide}, the total number of sharing operations of all car-rental transactions in Europe in 2017 is $86,41$~M ($\approx 237,000$ daily)~\cite{statistics:num-of-transactions-europe}. World-wide, the number of sharing operations compiles to $40$~M transactions in 2019 ($\approx 110,000$ daily) for Avis Budget group, one of the world-leading car-rental companies~\cite{statistics:num-of-transactions-avis-worldwide}.~\footnote{Assuming a uniform distribution for approximating the daily number of operations is reasonable.}

As \system\ supports a volume of $\approx~58,06$~M daily \glspl{AT} generation (see Table~\ref{tab:dabit}) -- considering the demanding scenario where an owner (i.e., a branch) shares a thousand vehicles -- our results show a two orders of magnitude more \glspl{AT} generation than the daily needs in real-world car-rental scenarios. Note that for comparison, we consider Step~2 computations for \glspl{AT} generation (see Fig.~\ref{fig:step2}) - intra-\gls{VSSP} computations can hinder the efficiency when scaling to multiple vehicles for a single owner.~\footnote{Recall that the costs are the non-linear operations such as comparisons over \gls{MPC}, the $\mathsf{eqz}$ function, to retrieve the vehicle keys in $\vec{D}^{u_o}$.} Thus, \system\ can scale and remain efficient, carrying millions of \glspl{AT} operations daily, capable of supporting a large number of vehicles per owner for short-term rental. Hence, it satisfies \textit{ESR1}.
 
\system\ straightforwardly can expand to support a vehicle-sharing company. Considering a single owner, as per branch-holder, a vehicle-sharing company can create multiple owners within the \system\ that each will manage their corresponding number of vehicles. Recall that each owner, branch-holder, can retrieve the corresponding set of their vehicles with a simple query at the DB, $DB^{S_i}$, that each \gls{VSSP} server holds. The query operation can be parallelizable, and its efficiency and scalability are related mainly by the underlying database structures and technologies, thus an orthogonal to \system.


\section{Related Work}\label{sec:related_work}

\begin{table*}[h]
\rowcolors{4}{}{gray!25}
  \centering
  \caption{Comparison of \system\ with state-of-the-art \acrfullpl{VSS} in terms of solution-design requirements, and \acrfull{SP} trust assumptions (see Sec.~\ref{sec:system_model}). Papers are listed in a chronological order.}
  \label{table:vehicle-access-comp}
  \resizebox{\textwidth}{!}{%
  \begin{tabular}{lccccccccccccccccc}
  \toprule  \toprule
  \multicolumn{1}{c}{\multirow{2}{*}{Paper}} &
    \multicolumn{2}{c}{Functional} &
    \multicolumn{8}{c}{Security} &
    \multicolumn{3}{c}{Privacy} &
    \multicolumn{2}{c}{Performance} &
    \multicolumn{2}{c}{\multirow{2}{*}{\gls{SP} assumption}} \\  \cmidrule{2-16} 
  \multicolumn{1}{c}{} &
    FR1 &
    FR2 &
    SR1 &
    SR2 &
    SR3 &
    SR4 &
    SR5 &
    SR6 &
    SR7 &
    SR8 &
    PR1 &
    PR2 &
    PR3 &
    1-1 &
    ESR1 & \\
    \toprule  \toprule
    Busold et al.~\cite{DBLP:conf/codaspy/BusoldTWDSSS13} & \checkmark & \checkmark & - & \checkmark & - & \checkmark & \checkmark & - & - & - & - & - & - & \checkmark & - & \textcolor{red}{Trusted} \\ [1ex]
    Kasper et.al.~\cite{DBLP:conf/rfidsec/KasperKOZP13} & \checkmark & - & \checkmark & \checkmark & \checkmark & - & - & \checkmark & \checkmark & \checkmark & - & - & - & \checkmark & - & \textcolor{red}{Trusted} \\ [1ex]
    Wei et al.~\cite{DBLP:journals/access/WeiYWWD17} & \checkmark & \checkmark & - & - & - & - & - & \checkmark & - & \checkmark & - & - & - & \checkmark & - & \textcolor{red}{Trusted} \\ [1ex]
    Groza et al.~\cite{DBLP:conf/vehits/GrozaAM17} & \checkmark & - & - & \checkmark & - & - & - & - & - & - & - & - & - & \checkmark & - & \textcolor{red}{Trusted} \\ [1ex]
    Dmitrienko and Plappert~\cite{DBLP:conf/codaspy/DmitrienkoP17} & \checkmark & - & \checkmark & \checkmark & \checkmark & - &  & \checkmark & - & \checkmark & - & - & - & \checkmark & - & \textcolor{red}{Trusted} \\ [1ex]
    \exprotocol~\cite{DBLP:conf/esorics/SymeonidisAMMDP17} & \checkmark & \checkmark & \checkmark & \checkmark & \checkmark & \checkmark & \checkmark & \checkmark & \checkmark & \checkmark & \checkmark & \checkmark & \checkmark & \checkmark & - & UnTrusted \\ [1ex]
    Groza et al.~\cite{DBLP:journals/access/GrozaABMG20} & \checkmark & \checkmark & - & \checkmark & \checkmark & \checkmark & \checkmark & \checkmark & - & \checkmark & - & - & - & \checkmark & - & \textcolor{red}{Trusted} \\ [1ex]
    \textbf{\system}  & \checkmark & \checkmark & \checkmark & \checkmark & \checkmark & \checkmark & \checkmark & \checkmark & \checkmark & \checkmark & \checkmark & \checkmark & \checkmark & \checkmark & \checkmark & UnTrusted \\ [1ex]
  \bottomrule \bottomrule
  \end{tabular}%
  }
  \end{table*}


State of the art on \acrfullpl{VSS} ranges from (fully) trusting \acrfullpl{SP} to consider them having an adversarial behavior. Design assumptions on trust affect the selected requirements and, subsequently, solution designs. As illustrated in Table~\ref{table:vehicle-access-comp}, there is a large body of work for secure vehicle access and sharing in \glspl{VSS}. However, users' privacy towards an untrusted \gls{SP} is only considered by~\cite{DBLP:conf/esorics/SymeonidisAMMDP17} and \system, with the current system design advancing significantly~\cite{DBLP:conf/esorics/SymeonidisAMMDP17} in terms of efficiency and scalability. 

All other proposed solution designs for vehicle accessing and delegation are considering \gls{SP} as a trusted entity to collect data for access and sharing operations in \gls{VSS}. There can have control over users' data by generating and storing session keys of transactions and master keys of vehicles. Initially, Busold et al.~\cite{DBLP:conf/codaspy/BusoldTWDSSS13} proposed a protocol for dynamic access to a car's immobilizer and delegation possibilities for accessing. At their proposed protocol, the vehicle owner and \gls{VM} exchange keys used to encrypt and sign two \acrfullpl{AT} -- one for authenticating the owner accessing the car and one for delegating access rights to a consumer. Confidentiality of \gls{BD} and \gls{AT} is not preserved, as the delegation is happening using a \gls{MAC}-signing operation. Accountability, non-reputation of origin, and delivery of \gls{AT} is also not preserved due to \gls{MAC}-signing - the session key is generated by the owner for each delegation operation. \cite{DBLP:conf/codaspy/BusoldTWDSSS13} treats the \gls{VM} as a trusted holding session keys for authentication and access of the vehicle. 

The work of Kasper et. al.~\cite{DBLP:conf/rfidsec/KasperKOZP13} considers a trusted car-sharing \gls{SP} and a eID. The eID interacts with a user to register the user at the vehicle-sharing \gls{SP}. Their solution design considers public keys for encryption and signing a delegation compromising backward and forward secrecy along with privacy requirements. Wei et al.~\cite{DBLP:journals/access/WeiYWWD17} offers a similar solution to~\cite{DBLP:conf/rfidsec/KasperKOZP13} using identity-based encryption for the generation of the public/private key pairs for the owner and consumer. The keys are generated with the identity of the owner and its car. For the consumer, the inputs are the customer's identity and the access rights granted for the vehicle. The \gls{BD} is sent in the clear, lacking data and entity authentication verification by the vehicle.

Groza et al.~\cite{DBLP:conf/vehits/GrozaAM17} proposed an access control and delegation of rights protocol using an MSP430 Microcontroller. Their main security operations for delegation are to provide data authentication of \gls{BD} using \gls{MAC} cryptographic primitive. Dmitrienko and Plappert~\cite{DBLP:conf/codaspy/DmitrienkoP17} designated a secure free-floating vehicle sharing system. They proposed using two-factor authentication, RFID-enabled smart cards, and \Glspl{PD} to access a vehicle. In contrast to \system, their solution design considers a centralized vehicle-sharing \gls{SP} that is fully trusted. The \gls{SP} has access to the master key of vehicles in clear contrast to \system. Thus, it allows the \gls{SP} to collect the information exchanged between the vehicle, the \gls{SP} and each of the users for every vehicle access provision.

In recent work, Groza et al.~\cite{DBLP:journals/access/GrozaABMG20} proposed an access control protocol using smartphones as a vehicle key for vehicle access and delegation enabling sharing. To preserve the security and anonymity of users accessing a vehicle, they combined identity-based encryption and group signatures. In their solution design~\cite{DBLP:journals/access/GrozaABMG20}, they distinguished two types of sharing, persistent and ephemeral. Although they utilize group signatures for privacy preservation, it is only for persistent delegation. In ephemeral delegation for dynamic vehicle-sharing, identity encryption is used, removing the anonymity properties that group signatures can apply. Hence, we consider their solution as only a secure approach to vehicle sharing.

\exprotocol~\cite{DBLP:conf/esorics/SymeonidisAMMDP17} improves on the work proposed in~\cite{DBLP:conf/codaspy/DmitrienkoP17} in terms of the adversarial consideration of the \gls{SP} (i.e., \gls{VSSP}), the privacy requirements, and the secrecy of vehicle keys towards the \gls{SP}, to mention a few. In specific, it considers untrusted servers in \gls{VSSP} for the generation and distribution of \glspl{AT}. The authors utilize \gls{MPC} in combination with several cryptographic primitives. With their work, they also consider malicious users and support user accountability, revealing a user's identity in wrongdoings. However, \exprotocol\ is not tested on how it scales to multiple evaluations -- to a large fleet of the vehicle with multiple owners of multiple vehicles. \system\  maintains the design advantages of \exprotocol~\cite{DBLP:conf/esorics/SymeonidisAMMDP17}, and is proven to run significantly faster than~\cite{DBLP:conf/esorics/SymeonidisAMMDP17} due to its optimized design and \gls{MPC} constructions.

Work on vehicle-sharing also focuses on complementary operations to access provisions, such as booking, payments, and accountability. Huang et al.~\cite{DBLP:journals/tvt/HuangLNS20} proposed a privacy-preserving identity management protocol focusing on authentication while verifying users' misbehavior. They utilize decentralized entities and a centralized vehicle sharing \gls{SP}. However, the \gls{SP} is trusted and can know who is sharing, which vehicle, with whom. Madhusudan~et~al.~\cite{DBLP:conf/icissp/MadhusudanSMZP19} and De Troch~\cite{de2020dpace} proposed privacy-preserving protocols for booking and payment operations on vehicle sharing systems. Their protocols utilize smart contracts on the Ethereum blockchain. Trust is placed on cryptographic primitives and blockchain instead of a centralized \gls{SP}. De Troch~\cite{de2020dpace} also considers accountability in case of misbehavior, in which there is a loss of privacy and deposit to punish malicious behavior. 

Beyond vehicle sharing security and privacy, vehicular communications security and privacy received extensive attention over the years~\cite{DBLP:journals/tits/KhodaeiJP18,9311462,PapadimitratosGH:C:2006}. Recent results focus, for example, on scalable systems, notably for credential management~\cite{DBLP:journals/tits/KhodaeiJP18,DBLP:conf/wisec/KhodaeiNP19,DBLP:journals/corr/abs-2004-03407}, and decentralized cooperative defenses~\cite{DBLP:journals/tissec/JinP19,DBLP:journals/adhoc/JinP19}. Moreover, Huayi et al.~\cite{qi2020scalable} proposed an enhanced scheme of~\cite{DBLP:journals/tdsc/TroncosoDKBP11}, namely DUBI, a decentralized and privacy-preserving usage-based insurance scheme built on the blockchain technology to address privacy concerns for pay-as-you-drive insurances using zero-knowledge proofs and smart contracts.

\section{Conclusion}\label{sec:conclusion}
In this paper, we proposed \system\ -- an efficient, scalable, secure, and privacy-enhancing system for vehicle access provision. It allows users to dynamically instantiate, share, and access vehicles in a secure and privacy-enhancing fashion. To achieve its security and privacy guarantees, \system\ deploys secure multiparty computation for access token generation and sharing while keeping transactions and booking details confidential. To ensure efficiency and scalability, \system\ utilizes cryptographic primitives in combination with secure multiparty-computation protocols, supporting various users and vehicles per user. We presented a formal analysis of our system security and privacy requirements and designed a prototype as a proof-of-concept. 

We demonstrated that \system\ is suitable for serving large numbers of individuals, each with few vehicles and rental companies with hundred or thousand of vehicles per branch. We benchmarked the cryptographic operations and secure multiparty evaluations testing over arithmetic circuits with HtMAC-MiMC demonstrating its efficiency and scalability. For comparison to \exprotocol, we tested \system\ for the case of binary circuits with CBC-MAC-AES. We showed that \system\ achieves a significant performance improvement: $\approx \stepBmacCBCtime$~ms for a vehicle access provision, thus demonstrating its efficiency compared to~\cite{DBLP:conf/esorics/SymeonidisAMMDP17} (i.e., $\stepBCBCmoreAT$ times faster). We also demonstrated that \system\ is practical on the vehicle side too as a \gls{AT} operations on a prototype \gls{OBU} box takes only $\approx \stepDtime$~ms. 

In the future, aiming to make the operations even more efficient, we will investigate cryptographic primitives using lightweight block ciphers such as Rasta. We also plan to extend \system\ to booking and payment operations and protect against active adversaries on the untrusted servers.

\bibliographystyle{IEEEtran}
\bibliography{biblio.bib}


\ifCLASSOPTIONcaptionsoff
  \newpage
\fi




%
\appendices

\section{\system\ complete representation and simplified representation for the proof of Theorem~\ref{thmextended}.}\label{appendix:protocol_complete}

We provide complete representation of  \system\ including all cryptographic operations and messages exchanged for Step~1~-~Step~4. Moreover, we provide a simplified representation of \system\ for the proof of Theorem~\ref{thmextended}.


\begin{landscape}


    \begin{figure}[!ht]
\centering
\resizebox{1.25\textwidth}{!}{%
\fbox{
\begin{tikzpicture}
    \node (o) at (0,0) {\textbf{Owner ($u_o$)}};
    \node[below = \y em of o] (odot) {};
    \draw[dotted] (o)--(odot);
    
    \node[right = 8em of o] (vehicle) {\textbf{Vehicle ($veh_{u_o}$)}};
    \node[below = \y em of vehicle] (vehicledot) {};
    \draw[dotted] (vehicle)--(vehicledot);
    
    \node[right = 9.5em of vehicle] (consumer) {\textbf{Consumer ($u_c$)}};
    \node[below = \y em of consumer] (cdot) {};
    \draw[dotted] (consumer)--(cdot);
    
    \node[right = 5.5em of consumer] (pl) {\textbf{Public Ledger ($\mathsf{PL}$)}};
    \node[below = \y em of pl] (pdot) {};
    \draw[dotted] (pl)--(pdot);
    
    \node[right = 13em of pl] (s) {\textbf{Servers} $\mathsf{S}_1\dots \mathsf{S}_i\dots \mathsf{S}_l$};
    \node[below = \y em of s] (sdot) {};
    \draw[dotted] (s)--(sdot);
    
    \node[below = 1em of o] (o11) {};
    \node[right = 31em of o11] (consumer11) {};
    \draw [<->] (o11) -- node[fill=white] {$BD^{u_o, u_c} = \{\mathsf{hash}(\mathit{Cert}^{u_c}) ,ID^{veh_{u_o}},L^{veh_{u_o}},CD^{u_c},AC^{u_c},ID^{BD}\}$} (consumer11);
    
    \node[below = 3em of o] (o1) {};
    \node[right = 31em of o1] (consumer1) {};
    \draw [->] (o1) -- node[fill=white] {msg$\{SES\_K\_GEN\_REQ, ID^{BD}\}$} (consumer1);
    
    \node[below = 1em of consumer1, fill=white, draw, rounded corners] (prf1) { 
                \begin{varwidth}{\linewidth}
                    \begin{algorithmic}[1]
                        \STATE $\{K^{u_c}_{enc}, \vec{K}^{u_c}_{tag}\} \leftarrow \mathsf{kdf}(K^{u_c}_{master}, counter)$
                            \STATE $[K^{u_c}_{enc}] \leftarrow \mathsf{share}(K^{u_c}_{enc})$
                            \STATE $[\vec{K}^{u_c}_{tag}] \leftarrow \mathsf{share}(\vec{K}^{u_c}_{tag})$
                        \FOR{$i = 1\dots l$}
                            \STATE $C^{\mathsf{S}_i} \leftarrow \mathsf{enc}(Pk^{\mathsf{S}_i}, \{[K^{u_c}_{enc}],[\vec{K}^{u_c}_{tag}]\})$
                        \ENDFOR
            \end{algorithmic}%
            \end{varwidth}
            };

    \node[below = 1em of o1, fill=white, draw, rounded corners] (mpc3) {
            \begin{varwidth}{\linewidth}
                \begin{algorithmic}[1]
                    \STATE $\sigma^{u_o} \leftarrow \mathsf{sign}(Sk^{u_o},BD^{u_o, u_c})$
                    \STATE $M^{u_c} \leftarrow \{BD^{u_o, u_c}, \sigma^{u_o}\}$
                    \STATE $[M^{u_c}] \leftarrow \mathsf{share}(M^{u_c})$
                \end{algorithmic}%
            \end{varwidth}
            };  
            
    \node[below = 10em of consumer1] (consumer2) {};
    \node[left = 31em of consumer2] (o2) {};
    \draw [->] (consumer2) -- node[fill=white] {msg$\{SES\_K\_GEN\_ACK, ID^{BD}, \{C^{S_1}, \dots, C^{S_l}\}\}$} (o2);
    
    \node[below = 1em of o2] (o3) {};
    \node[right = 67em of o3] (s1) {};
    \draw [->] (o3) -- node[fill=white] {msg$_i\{AT\_GEN\_REQ, ID^{u_o}, C^{\mathsf{S}_i}, [M^{u_c}]\}$} (s1);

    \node[below = 1em of s1, fill=white, draw, rounded corners] (s2) { 
                \begin{varwidth}{\linewidth}
                    \begin{algorithmic}[1]
                        \STATE $\{[K^{u_c}_{enc}],[\vec{K}^{u_c}_{tag}]\} \leftarrow \mathsf{dec}(Sk^{\mathsf{S}_i},C^{\mathsf{S}_i})$
                        \STATE $\vec{D}^{u_o} \leftarrow \mathsf{query}(ID^{u_o}, DB^{\mathsf{S}_i})$
                        \FOR{$y = 1\dots n$}
                            \STATE $\vec{[D]}^{u_o} \leftarrow ([ID^{veh_{u_o}}] \stackrel{?}{=} [ID^{veh_{u_o}}_y])$
                        \ENDFOR
                        \STATE $[K^{veh_{u_o}}] \leftarrow \vec{D}^{veh_{u_o}} \times \vec{D}^{u_o}$
                        \STATE $[AT^{veh_{u_o}}] \leftarrow \mathsf{E}([K^{veh_{u_o}}], [M^{u_c}])$
                        \STATE $[C^{u_c}] \leftarrow \mathsf{E}([K^{u_c}_{enc}],\{[AT^{veh_{u_o}}], [ID^{veh_{u_o}}]\})$
                        \STATE $C^{u_c} \leftarrow \mathsf{open}([C^{u_c}])$
                        \STATE $[AuthTag^{BD^{u_o, u_c}}] \leftarrow \mathsf{mac}([K^{u_c}_{tag_{mac}}], E(K^{u_c}_{tag_{enc}}, [BD^{u_o, u_c}]))$
                        \STATE $AuthTag^{BD^{u_o, u_c}} \leftarrow \mathsf{open}([AuthTag^{BD^{u_o, u_c}}])$
                \end{algorithmic}%
            \end{varwidth}
            };
            
    \node[below = 1em of s2] (s3) {};
    \node[left = 22em of s3] (pl1) {};
    \draw [->] (s3) -- node[fill=white] {msg$_i\{AT\_PUB\_REQ, C^{u_c}, AuthTag^{BD^{u_o, u_c}}\}$} (pl1);

    \node[below = 1em of pl1, fill=white, draw, rounded corners] (pl2) {
        \begin{varwidth}{\linewidth}
        \begin{algorithmic}
            \STATE $\mathsf{publish}(TS^{Pub}_i, C^{u_c}, AuthTag^{BD^{u_o, u_c}})$
        \end{algorithmic}%
        \end{varwidth}
        };
      
    \node[below = 1em of pl2] (pl3) {};
    \node[right = 22em of pl3] (s4) {};
    \draw [->] (pl3) -- node[fill=white] {msg$\{M\_PUB\_ACK, TS^{Pub}_i\}$} (s4);

    \node[below = 1em of s4] (s5) {};
    \node[left = 67em of s5] (o4) {};
    \draw [->] (s5) -- node[fill=white] {msg$\{AT\_PUB\_ACK, TS^{Pub}_i\}$} (o4);
    
    \node[below = 1em of o4] (o5) {};
    \node[right =31em of o5] (consumer3) {};
    \draw [->] (o5) -- node[fill=white] {msg$\{AT\_PUB\_ACK, TS^{Pub}_i\}$} (consumer3);
    
    \node[below = 5em of pl3] (pl4) {};
    \draw (pl4) node[fill=white] {
    \begin{tabularx}{9.5cm}{|X|X|X|}
        \hline
        $TS^{Pub}_i$ & $C^{u_c}$ & $AuthTag^{BD^{u_o, u_c}}$ \\
        \hline
        14774098 & ersdf3tx0 & fwefw234 \\
        \hline
        $\dots$ & $\dots$ & $\dots$ \\
        \hline
      \end{tabularx}
      };
    
    \node[below = 5em of consumer3] (consumer4) {};
    \node[right = 13em of consumer4] (pl5) {};
    \draw [->] (consumer4) -- node[fill=white] {$\mathsf{query\_an}(TS^{Pub}_i)$} (pl5);

    \node[below = 1em of pl5] (pl6) {};
    \node[left = 13em of pl6] (consumer5) {};
    \draw [->] (pl6) -- node[fill=white] {msg$\{C^{u_c}, AuthTag^{BD^{u_o, u_c}}\}$} (consumer5);
    
    \node[below = 1em of consumer5, fill=white, draw, rounded corners] (consumer6) {
        \begin{varwidth}{\linewidth}
        \begin{algorithmic}
        \IF {$AuthTag^{BD^{u_o, u_c}} \stackrel{?}{=} \mathsf{mac}(K^{u_c}_{tag_{mac}}, E(K^{u_c}_{tag_{enc}}, BD^{u_o, u_c}))$}
            \STATE $\{AT^{veh_{u_o}}, ID^{veh_{u_o}}\} \leftarrow \mathsf{D}(K^{u_c}_{enc},C^{u_c})$
        \ELSE
            \STATE Break
        \ENDIF
        \end{algorithmic}%
        \end{varwidth}
        };   
      
    \node[below = 1em of consumer6] (consumer7) {};
    \node[left = 16.5em of consumer7] (c1) {};
    \draw [dashed,->] (consumer7) -- node[fill=white] {msg$\{AT^{veh_{u_o}}, ID^{veh_{u_o}},\mathit{Cert}^{u_c}\}$} (c1);
    
    \node[below = 1em of c1, fill=white, draw, rounded corners] (c2) {
        \begin{varwidth}{\linewidth}
            \begin{algorithmic}[1]
                \STATE $\{BD^{u_o, u_c}, \sigma^{u_o}\} \leftarrow \mathsf{D}(K^{veh_{u_o}},AT^{veh_{u_o}})$
                \STATE $\mathsf{verify}(Pk^{u_o},BD^{u_o, u_c},\sigma^{u_o})$
            \end{algorithmic}%
        \end{varwidth}
        }; 
    
    \node[below = 5em of c1] (c2) {};
    \node[right = 16em of c2] (consumer8) {};
    \draw [dashed,<->] (c2) -- node[fill=white] {Challenge / Response} (consumer8);
    
    \node[below = 1em of c2, fill=white, draw, rounded corners] (c2) {
    \begin{varwidth}{\linewidth}
         \begin{algorithmic}
             \STATE $\sigma^{veh_{u_o}}_{Access} \leftarrow \mathsf{sign}(Sk^{veh_{u_o}}, \{BD^{u_o, u_c},TS^{veh_{u_o}}_{Access}\})$
         \end{algorithmic}%
         \end{varwidth}
         }; 
        
    \node[below = 1em of c2] (c3) {};
    \node[left = 14em of c3] (o6) {};
    \draw [dashed, ->] (c3) -- node[fill=white] {msg$\{\sigma^{veh_{u_o}}_{Access},TS^{veh_{u_o}}_{Access}\}$} (o6);
    
    \node[below = 1em of o6, fill=white, draw, rounded corners] (o7) {
        \begin{varwidth}{\linewidth}
        \begin{algorithmic}
            \STATE $\mathsf{verify}(Pk^{veh_{u_o}},\{BD^{u_o, u_c},TS^{veh_{u_o}}_{Access}\}, \sigma^{veh_{u_o}}_{Access})$
        \end{algorithmic}%
        \end{varwidth}
        };
    \end{tikzpicture}}}
    \caption{\system\ complete representation.}
    \label{fig:prot}
    \end{figure}
\end{landscape}


\newcommand\yproof{51.5}

\begin{landscape}
\begin{figure}[!ht]
\centering
\resizebox{1.25\textwidth}{!}{%
\fbox{
\begin{tikzpicture}
    \node (o) at (0,0) {\textbf{Owner ($u_o$)}};
    \node[below = \sy em of o] (odot) {};
    \draw[dotted] (o)--(odot);
    
    \node[right = 8em of o] (vehicle) {\textbf{Vehicle ($veh_{u_o}$)}};
    \node[below = \sy em of vehicle] (vehicledot) {};
    \draw[dotted] (vehicle)--(vehicledot);
    
    \node[right = 9.5em of vehicle] (consumer) {\textbf{Consumer ($u_c$)}};
    \node[below = \sy em of consumer] (cdot) {};
    \draw[dotted] (consumer)--(cdot);
    
    \node[right = 5.5em of consumer] (pl) {\textbf{Public Ledger ($\mathsf{PL}$)}};
    \node[below = \sy em of pl] (pdot) {};
    \draw[dotted] (pl)--(pdot);
    
    \node[right = 13.5em of pl] (s) {\textbf{\gls{VSSP} (trusted)}};
    \node[below = \sy em of s] (sdot) {};
    \draw[dotted] (s)--(sdot);
    
    \node[below = 1em of o] (o11) {};
    \node[right = 31em of o11] (consumer11) {};
    \draw [<->] (o11) -- node[fill=white] {$BD^{u_o, u_c} = \{\mathsf{hash}(\mathit{Cert}^{u_c}) ,ID^{veh_{u_o}},L^{veh_{u_o}},CD^{u_c},AC^{u_c},ID^{BD}\}$} (consumer11);
    
    \node[below = 3em of o] (o1) {};
    \node[right = 31em of o1] (consumer1) {};
    \draw [->] (o1) -- node[fill=white] {msg$\{SES\_K\_GEN\_REQ, ID^{BD}\}$} (consumer1);
    
    \node[below = 1em of consumer1, fill=white, draw, rounded corners] (prf1) { 
                \begin{varwidth}{\linewidth}
                    \begin{algorithmic}[1]
                        \STATE $\{\pi^{u_c}_{enc},\pi^{u_c}_{tag_{enc}}\} \getsR \mathrm{Func}(\mathsf{E})$
                        \STATE $K^{u_c}_{tag_{mac}} \leftarrow \$^{u_c}$
                        \STATE $C^{VSSP} \leftarrow \rho^{VSSP}(\{\pi^{u_c}_{enc},\pi^{u_c}_{tag_{enc}},K^{u_c}_{tag_{mac}}\})$
            \end{algorithmic}%
            \end{varwidth}
            };

    \node[below = 1em of o1, fill=white, draw, rounded corners] (mpc3) {
            \begin{varwidth}{\linewidth}
                \begin{algorithmic}[1]
                    \STATE $\sigma^{u_o} \leftarrow \mathsf{sign}(Sk^{u_o},BD^{u_o, u_c})$
                    \STATE $M^{u_c} \leftarrow \{BD^{u_o, u_c}, \sigma^{u_o}\}$
                \end{algorithmic}%
            \end{varwidth}
            };  
            
    \node[below = 10em of consumer1] (consumer2) {};
    \node[left = 31em of consumer2] (o2) {};
    \draw [->] (consumer2) -- node[fill=white] {msg$\{SES\_K\_GEN\_ACK, ID^{BD}, C^{VSSP}\}$} (o2);
    
    \node[below = 1em of o2] (o3) {};
    \node[right = 67em of o3] (s1) {};
    \draw [->] (o3) -- node[fill=white] {msg$_i\{AT\_GEN\_REQ, ID^{u_o}, C^{VSSP}, M^{u_c}\}$} (s1);

    \node[below = 1em of s1, fill=white, draw, rounded corners] (s2) { 
                \begin{varwidth}{\linewidth}
                    \begin{algorithmic}[1]
                        \STATE $\{\pi^{u_c}_{enc},\pi^{u_c}_{tag_{enc}},K^{u_c}_{tag_{mac}}\} \leftarrow (\rho^{VSSP})^{-1}(C^{VSSP})$
                        \STATE $\pi^{veh_{u_o}} \leftarrow \mathsf{query}(ID^{u_o}, DB^{VSSP})$
                        \STATE $AT^{veh_{u_o}} \leftarrow \pi^{veh_{u_o}}(M^{u_c})$
                        \STATE $C^{u_c} \leftarrow \pi^{u_c}_{enc}(\{AT^{veh_{u_o}}, ID^{veh_{u_o}}\})$
                        \STATE $AuthTag^{BD^{u_o, u_c}} \leftarrow \mathsf{mac}(K^{u_c}_{tag_{mac}}, \pi^{u_c}_{tag_{enc}}(BD^{u_o, u_c}))$
                \end{algorithmic}%
            \end{varwidth}
            };
            
    \node[below = 1em of s2] (s3) {};
    \node[left = 22em of s3] (pl1) {};
    \draw [->] (s3) -- node[fill=white] {msg$_i\{AT\_PUB\_REQ, C^{u_c}, AuthTag^{BD^{u_o, u_c}}\}$} (pl1);

    \node[below = 1em of pl1, fill=white, draw, rounded corners] (pl2) {
        \begin{varwidth}{\linewidth}
        \begin{algorithmic}
            \STATE $\mathsf{publish}(TS^{Pub}_i, C^{u_c}, AuthTag^{BD^{u_o, u_c}})$
        \end{algorithmic}%
        \end{varwidth}
        };
      
    \node[below = 1em of pl2] (pl3) {};
    \node[right = 22em of pl3] (s4) {};
    \draw [->] (pl3) -- node[fill=white] {msg$\{M\_PUB\_ACK, TS^{Pub}_i\}$} (s4);

    \node[below = 1em of s4] (s5) {};
    \node[left = 67em of s5] (o4) {};
    \draw [->] (s5) -- node[fill=white] {msg$\{AT\_PUB\_ACK, TS^{Pub}_i\}$} (o4);
    
    \node[below = 1em of o4] (o5) {};
    \node[right =31em of o5] (consumer3) {};
    \draw [->] (o5) -- node[fill=white] {msg$\{AT\_PUB\_ACK, TS^{Pub}_i\}$} (consumer3);
    
    \node[below = 5em of pl3] (pl4) {};
    \draw (pl4) node[fill=white] {
    \begin{tabularx}{9.5cm}{|X|X|X|}
        \hline
        $TS^{Pub}_i$ & $C^{u_c}$ & $AuthTag^{BD^{u_o, u_c}}$ \\
        \hline
        14774098 & ersdf3tx0 & fwefw234 \\
        \hline
        $\dots$ & $\dots$ & $\dots$ \\
        \hline
      \end{tabularx}
      };
    
    \node[below = 5em of consumer3] (consumer4) {};
    \node[right = 13em of consumer4] (pl5) {};
    \draw [->] (consumer4) -- node[fill=white] {$\mathsf{query\_an}(TS^{Pub}_i)$} (pl5);

    \node[below = 1em of pl5] (pl6) {};
    \node[left = 13em of pl6] (consumer5) {};
    \draw [->] (pl6) -- node[fill=white] {msg$\{C^{u_c}, AuthTag^{BD^{u_o, u_c}}\}$} (consumer5);
    
    \node[below = 1em of consumer5, fill=white, draw, rounded corners] (consumer6) {
        \begin{varwidth}{\linewidth}
        \begin{algorithmic}
        \IF {$AuthTag^{BD^{u_o, u_c}} \stackrel{?}{=} \mathsf{mac}(K^{u_c}_{tag_{mac}}, E(K^{u_c}_{tag_{enc}}, BD^{u_o, u_c}))$}
            \STATE $\{AT^{veh_{u_o}}, ID^{veh_{u_o}}\} \leftarrow (\pi^{u_c}_{enc})^{-1}(C^{u_c})$
        \ELSE
            \STATE Break
        \ENDIF
        \end{algorithmic}%
        \end{varwidth}
        };   
      
    \node[below = 1em of consumer6] (consumer7) {};
    \node[left = 16.5em of consumer7] (c1) {};
    \draw [dashed,->] (consumer7) -- node[fill=white] {msg$\{AT^{veh_{u_o}}, ID^{veh_{u_o}},\mathit{Cert}^{u_c}\}$} (c1);
    
    \node[below = 1em of c1, fill=white, draw, rounded corners] (c2) {
        \begin{varwidth}{\linewidth}
            \begin{algorithmic}[1]
                \STATE $\{BD^{u_o, u_c}, \sigma^{u_o}\} \leftarrow (\pi^{veh_{u_o}})^{-1}(AT^{veh_{u_o}})$
                \STATE $\mathsf{verify}(Pk^{u_o},BD^{u_o, u_c},\sigma^{u_o})$
            \end{algorithmic}%
        \end{varwidth}
        }; 
    
    \node[below = 5em of c1] (c2) {};
    \node[right = 16em of c2] (consumer8) {};
    \draw [dashed,<->] (c2) -- node[fill=white] {Challenge / Response} (consumer8);
    
    \node[below = 1em of c2, fill=white, draw, rounded corners] (c2) {
    \begin{varwidth}{\linewidth}
         \begin{algorithmic}
             \STATE $\sigma^{veh_{u_o}}_{Access} \leftarrow \mathsf{sign}(Sk^{veh_{u_o}}, \{BD^{u_o, u_c},TS^{veh_{u_o}}_{Access}\})$
         \end{algorithmic}%
         \end{varwidth}
         }; 
        
    \node[below = 1em of c2] (c3) {};
    \node[left = 14em of c3] (o6) {};
    \draw [dashed, ->] (c3) -- node[fill=white] {msg$\{\sigma^{veh_{u_o}}_{Access},TS^{veh_{u_o}}_{Access}\}$} (o6);
    
    \node[below = 1em of o6, fill=white, draw, rounded corners] (o7) {
        \begin{varwidth}{\linewidth}
        \begin{algorithmic}
            \STATE $\mathsf{verify}(Pk^{veh_{u_o}},\{BD^{u_o, u_c},TS^{veh_{u_o}}_{Access}\}, \sigma^{veh_{u_o}}_{Access})$
        \end{algorithmic}%
        \end{varwidth}
        };
    \end{tikzpicture}}}
    \caption{Simplified representation of \system\ for the proof of Theorem~\ref{thmextended}.}
    \label{fig:protsimplified}
    \end{figure}
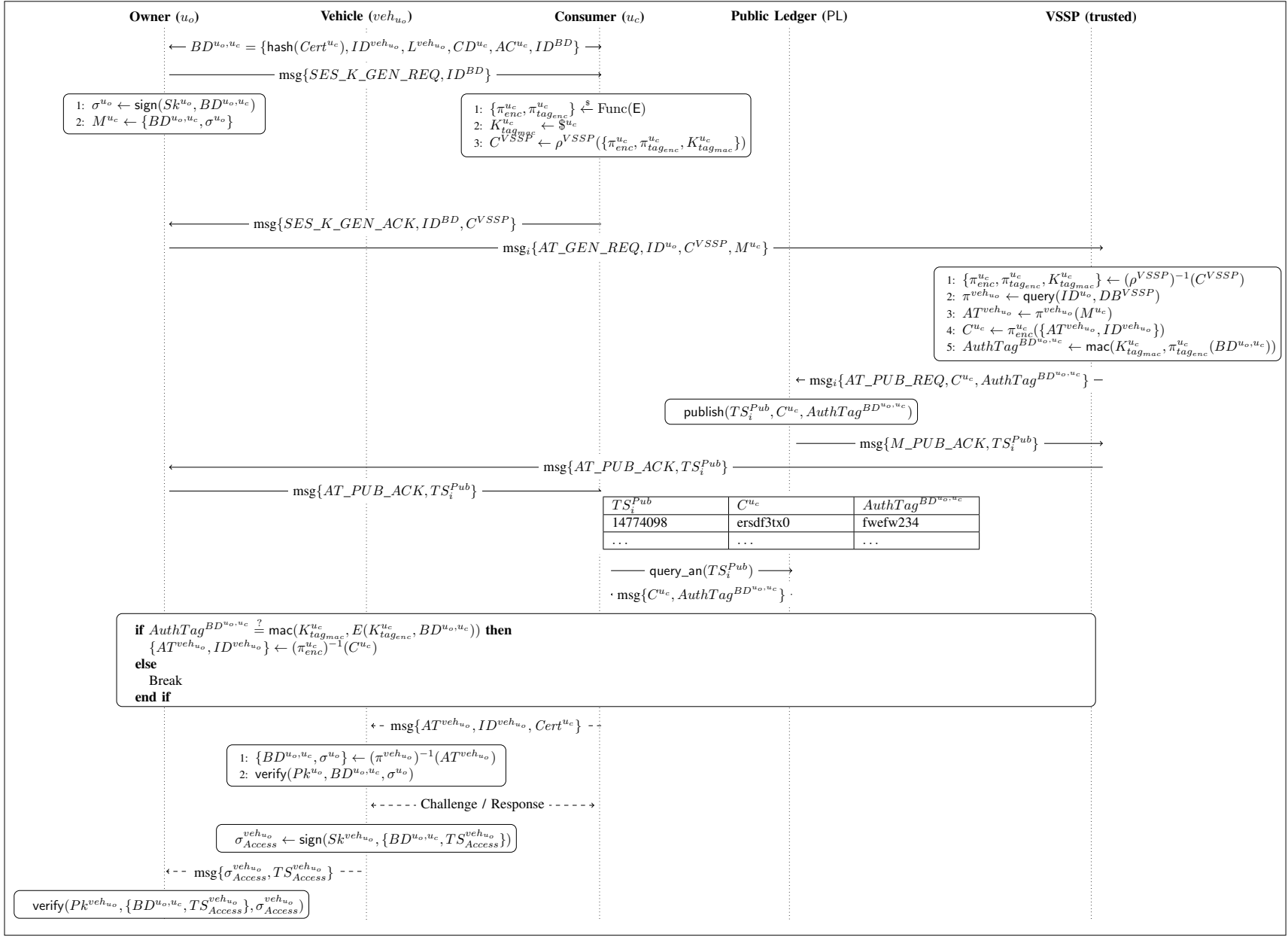
\end{landscape}


\end{document}